\documentclass[journal]{IEEEtran}
\ifdefined\pdfobjcompresslevel
\fi

\usepackage{dblfloatfix}
\usepackage[section]{placeins}
\usepackage{amsmath,amsfonts}
\usepackage{amsthm}
\newtheorem{proposition}{Proposition}
\newtheorem{theorem}{Theorem}
\usepackage{graphicx}
\usepackage{tikz}
\usepackage[utf8]{inputenc}   
\usepackage[T1]{fontenc}      
\usepackage{lmodern}          
\usetikzlibrary{quantikz2,backgrounds,fit,decorations.pathreplacing}
\usepackage{xcolor}%
\usepackage{cite}
\usepackage{url}
\usepackage{algorithm}
\usepackage{algorithmic}
\usepackage{soul}
\usepackage{hyperref} 
\usepackage{amssymb}

\begin{document} 

\title{Optimal Entanglement Routing in Quantum Repeater Chains: Beyond Fixed Operation Order and Purification Schedule}

\author{
Aikaterini Mandilara,
Antonia Tsili,
Dimitris Syvridis,
Konstantinos Christodoulopoulos
\thanks{A. Mandilara, D. Syvridis and K. Christodoulopoulos are with the Dep. of Informatics and Telecommunications, National and Kapodistrian University of Athens (e-mail: mandkat@di.uoa.gr). A. Tsili and A. Mandilara is also with Eulambia Advanced Technologies Ltd. }
}

\maketitle

\begin{abstract}
Entanglement routing establishes entangled pairs between distant nodes of a quantum network by purifying and swapping pairs generated on elementary links. Existing methods typically restrict the decision space along two axes: the operation order, often fixed to purify-then-swap (PtS), and the purification schedule of each link, often restricted to pumping. Focusing on linear repeater chains with finite link capacities, we relax both restrictions and study the maximization of the expected end-to-end throughput subject to a fidelity threshold under two quantum-noise models. We develop a unified capacity-aware optimization framework comprising an exact mixed-integer linear program (MILP) under PtS, instantiated with either pumping or general tree purification schedules, and an exact dynamic program (DP) over arbitrary operation orders. Under symmetric Pauli noise, we prove that pumping converges to a fidelity strictly below unity, imposing a capacity-independent feasibility bound and a maximum chain length on every pumping-based PtS method, whereas general tree schedules yield a capacity-dependent bound. We further prove that post-swap purification never increases the throughput, so a free operation order can only increase feasibility. Numerical evaluations confirm the bounds: tree schedules serve over 90\% of the requests that pumping cannot at moderate fidelity thresholds; within the pumping class, a free operation order recovers much of this advantage; but once links use tree schedules, the tree-based MILP and the order-exact DP serve identical request sets throughout. Operation order and purification schedule are thus substitutes, with the purification schedule the dominant factor governing feasibility under symmetric Pauli noise.
\end{abstract}

\begin{IEEEkeywords}
Quantum Internet, Entanglement routing, Quantum repeaters, Entanglement purification, Entanglement swapping, Entanglement distribution, Werner states, Dynamic programming, Mixed-integer linear programming
\end{IEEEkeywords}

\section{Introduction}\label{sec:introduction}

In his seminal paper, J. Kimble \cite{Kimble2008QuantumInternet} introduced the quantum Internet as a framework that ``offers a unifying set of opportunities and challenges across exciting intellectual and technical frontiers, including quantum computation, communication, and metrology''. Since then, quantum networking has made significant progress \cite{Illiano2022QIPStack,Azuma2022QuantumRepeaters} toward practical realization, while introducing new challenges across quantum information science, communication engineering, and network optimization \cite{Wehner2018QuantumInternet}.

A quantum network distributes maximally entangled Bell-state photon pairs (often referred to as Einstein--Podolsky--Rosen, or EPR, pairs), enabling a wide range of applications such as quantum teleportation, remote gates for distributed quantum computing, quantum-secure communication, and quantum-enhanced sensing and metrology. Entangled pairs are the resource these applications consume, but their distribution over long distances is limited by photon loss and channel noise, which is why quantum repeaters are needed. While loss destroys a pair outright, channel imperfections degrade the quality of the pairs that survive. Throughout this work we quantify that quality by the \textit{fidelity} with respect to the target Bell state; for the state families considered here (Section~\ref{sec:noise_models}) this single scalar characterizes the state completely.

The quantum network builds upon the quantum \emph{elementary links} that are established between \emph{adjacent} nodes through entangled-pair generation, transfer, and local memory storage. The problem studied here begins once the pairs of the elementary links are in place. To extend communication over multiple hops, quantum repeaters perform entanglement \textit{swapping}, an operation that consumes two entangled pairs spanning adjacent links to establish a new pair across a \emph{virtual link} between their distant endpoints. While swapping extends communication distance, it degrades the fidelity (the resulting pair has lower fidelity than either input pair). To counteract this, repeaters apply entanglement \textit{purification}, which probabilistically combines two lower-fidelity pairs into a single pair of higher fidelity. Distance thus costs fidelity, and fidelity costs pairs. The pairs available on each elementary link are finite ---bounded by the generation rate and the memory size--- and perishable, since decoherence degrades stored pairs over time. In this setting, \emph{capacity-aware entanglement routing} decides how the finite pairs of each link are purified and swapped to establish end-to-end entanglement, trading the rate of delivered pairs against their fidelity. In this work we formalize the objective as maximizing the expected end-to-end throughput (rate of delivered entangled pairs) subject to a fidelity threshold.

Determining when and where purification and swapping operations should be applied across multi-hop paths, while accounting for hardware and physical constraints, makes entanglement routing a challenging combinatorial optimization problem. Existing approaches typically restrict the decision space along two axes. First, they fix the \emph{operation order} to \emph{purify-then-swap (PtS)}, where entangled pairs on elementary links are purified before any swapping occurs. In general, purification and swapping may be interleaved in any order consistent with protocol's dependencies. Second, they restrict link-level \emph{purification schedule} to \textit{pumping}, in which a single pair is iteratively purified against fresh raw pairs, so that the schedule is fully specified by its purification "depth". In general, however, purification can follow arbitrary tree schedules, where both input pairs to a purification step may be outputs of prior purification operations. These two restrictions are distinct architectural choices: \textit{(i)} the \emph{operation order} of purification and swapping, and \textit{(ii)} the \emph{purification schedule}. Relaxing these opens two optimization dimensions that we explore in this paper.

Underlying both axes is a physical limitation: the physical noise affecting the entangled pairs determines the fidelity that purification can attain, and, consequently, changes the role of purification in the route constructions. We account for this by considering two Bell-diagonal noise models with markedly different purification behavior: a symmetric Pauli noise model, leading to Werner states, in which errors populate all three undesired Bell components, and a single-Pauli-error noise model, leading to two-Bell-state diagonal states, in which the noise is confined to a single error channel. 

\begin{figure}[b]
    \centering
    \includegraphics[width=\linewidth]{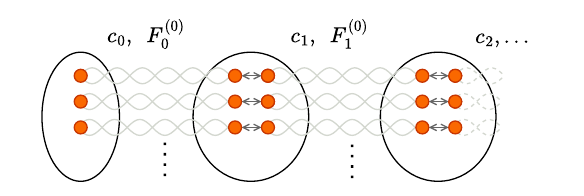}
   \caption{A schematic representation of the linear quantum repeater chain considered in this work. Neighboring repeater nodes are connected through elementary links $e$, each characterized by a baseline fidelity $F_e^{(0)}$ and an entanglement-generation capacity $c_e$. Long-distance entanglement is established through sequences of entanglement purification and entanglement swapping operations. Curly lines denote shared entanglement, while red dots represent matter qubits stored in quantum memories.}
    \label{fig:1}
\end{figure}

In this work, we investigate entanglement routing on a linear quantum repeater chain~\cite{dur1999_quantum_repeaters} (Fig.~\ref{fig:1}), a fundamental building block of more complex networks, where each elementary link $e$ is characterized by a finite entanglement capacity $c_e$,  and a baseline fidelity $F_e^{(0)}$ degraded by noise and memory decoherence. We develop an optimization framework that maximizes the expected end-to-end throughput subject to a fidelity constraint while systematically exploring the two design axes identified above: purification schedule and operation order. Our main contributions are:
\begin{itemize}
\item \textbf{A unified optimization framework.} We formulate a mixed-integer linear programming (MILP) model that is exact under PtS operation order and can be instantiated with either pumping or general tree purification schedules. We also develop a low-complexity greedy heuristic for online operation. We further develop a dynamic programming (DP) formulation that is exact over arbitrary purification and swapping operation orders for a given schedule class. Instantiated with tree schedule frontiers, the DP thus provides the global optimum over both schedule and order dimensions considered in this work.
\item \textbf{Analytical feasibility bounds.} Under symmetric Pauli noise, we prove that pumping converges to a closed-form fixed point below unit fidelity, yielding a capacity-independent end-to-end fidelity ceiling and a maximum chain length for every pumping-based PtS method. General purification tree schedules escape this ceiling, leading, instead, to a capacity-dependent feasibility bound.
\item \textbf{A separation between schedule and order.} We prove that post-swap purification can never increase throughput, as it consumes pairs' width and success probability to gain fidelity. Thus, a free operation order can add feasibility, or reach the threshold with cheaper link schedules, but cannot raise the throughput of any routing solution. This separates the two axes: the operation order adds value only where the link schedules alone cannot reach the fidelity threshold, and once they can, its benefit is already captured by a good schedule.
\item \textbf{Noise-model and objective generality.} We show that our formulations depend only on three structural properties---diminishing purification returns (used only by the heuristic), log-linear swapping, and an objective that is log-linear for the MILP and decomposable over sub-paths for the DP---rather than on specific state equations, allowing both Werner and two-Bell-state diagonal models to be solved without structural re-derivation.
\end{itemize}


We evaluated the framework through simulations on repeater chains under both noise models. Measured blocking matched the analytical bounds, and tree schedules served over 90\% of the requests infeasible under pumping at moderate fidelity thresholds. The tree-based MILP and the order-exact DP served identical instance sets throughout, the DP retaining only a modest throughput advantage at longer chains. Within the pumping class, an arbitrary operation order recovered much of the trees schedule advantage by purifying after swapping, i.e., by accessing a purification recursion unavailable to pumping on individual links. Q-PATH~\cite{li2022_qpath_qleap} blocked exactly as the pumping bound predicts, with throughput lower by construction, and the per-link optimal schedules of Chen and Jia~\cite{chen2024_jsac} lost to chain-wide coordination on heterogeneous links, while the greedy heuristic reproduced the optimal served set at millisecond runtimes.

The rest of the paper is organized as follows. Section~\ref{sec:related} reviews the relevant literature. Section~\ref{sec:ent_models_and_protocols} introduces the noise models and entanglement protocols considered in this work. Section~\ref{sec:schedules} examines purification schedule and operation order and derives the analytical results that guide the subsequent optimization. Section~\ref{sec:operation_model} formalizes the network operation model, including link capacities, memory decoherence, and timing constraints, while Section~\ref{sec:algorithms} formulates the capacity-aware routing problem and presents the proposed algorithms. The framework is evaluated numerically in Section~\ref{sec:results}, and Section~\ref{sec:conclusions} concludes the paper. Supporting theoretical results and proofs are provided in the Appendices.


\section{Related Work}
\label{sec:related}
While point-to-point quantum primitives have been studied extensively, optimizing multi-hop entanglement distribution at the network level is a comparatively young research direction. As quantum networks scale, classical networking problems such as routing, resource allocation, and throughput optimization re-emerge under distinctly quantum constraints~\cite{Shi2020Concurrent}. Entanglement routing therefore lies at a natural intersection of quantum information and classical network optimization, bringing together quantum communication protocols, communications engineering, operations research, and algorithm design.

Recent papers target general topologies and multi-commodity flows~\cite{wang2026_sftrap, xiao2024_psc, zhang2025_link_config, nguyen2025_merr}. A fidelity-aware routing algorithm over such a topology must decide, for each candidate path, how each link is purified and in which order purification and swapping operations are applied. We isolate this path-level subproblem and solve it exactly on the linear quantum repeater chain, the fundamental building block of more complex architectures~\cite{dur1999_quantum_repeaters, inesta2023_homog_chains}. The resulting path-level framework can naturally serve as a building block for future extensions incorporating path selection and multi-commodity resource contention.

However, the typical formulation of the entanglement-routing problem remains the maximization of end-to-end throughput subject to a fidelity constraint. Most existing approaches are heuristic or learning-based~\cite{wang2026_sftrap, xiao2024_psc, li2022_qpath_qleap, ni2025_joint}, trading optimality for scalability to large topologies and multi-request scheduling. Among them, Q-PATH and Q-LEAP~\cite{li2022_qpath_qleap} are representative approaches; we adopt Q-PATH as a baseline in Section~\ref{sec:results}. Several works have also pursued exact optimization. Zhang \emph{et al.}~\cite{zhang2025_link_config} formulate joint link configuration and purification as a MILP, employing a logarithmic transformation of the multiplicative Werner-parameter composition. Nguyen \emph{et al.}~\cite{nguyen2025_merr} formulate multi-commodity entanglement routing as an ILP with LP-relaxation rounding, while Zhang \emph{et al.}~\cite{zhang2025_satellite} apply Benders' decomposition to satellite quantum networks. Kar and Mukhopadhyay~\cite{kar2026_utility} propose a mixed-integer convex programming framework based on logarithmic utility functions, while Chen and Jia~\cite{chen2024_jsac} develop an exact DP for purification schedule and derive conditions under which PtS is optimal on multi-hop paths.

Despite their methodological differences, these exact formulations restrict the operation order and, with the exception of~\cite{chen2024_jsac}, further restrict link purification to pumping through the selection of a per-link purification depth. Several studies have questioned the restriction to a fixed operation order. Victora \emph{et al.}~\cite{victora2023_purification} show that asynchronous decoherence can render strict PtS inefficient and advocate dynamic selection among PtS, swap-then-purify (StP), and no purification; Haldar \emph{et al.}~\cite{haldar2025_quasilocal} report that StP can outperform PtS in multiplexed repeater chains at moderate-to-high fidelities; Zang \emph{et al.}~\cite{zang2023_buffer} identify buffer times as a decisive factor; and Koutsopoulos~\cite{koutsopoulos2024_milcom} formalizes the trade-off on two-link chains. Vecino Pe{\~n}as \emph{et al.}~\cite{vecino2026_strategy} optimize by dynamic programming the purification rounds applied to every pair of a path, elementary and multi-hop alike, but with deterministic operations and fidelity as the objective, so neither width nor throughput enters. What remains unavailable is an exact method that optimizes \emph{over} the operation order under probabilistic operations and finite capacity, where the order trades feasibility against throughput. Our DP fills this gap, allowing us to quantify separately the roles of operation order and purification schedule.

Concerning physical-layer noise, routing studies typically adopt either the single-Pauli-error noise model (two-Bell-state diagonal states)~\cite{li2022_qpath_qleap, wang2026_sftrap, koutsopoulos2024_milcom} or the symmetric Pauli noise model (Werner states)~\cite{xiao2024_psc, zhang2025_link_config}, with optimization formulations tailored to the chosen model. In contrast, our MILP and DP rely only on structural properties of the noise model and the objective, not on their specific equations, allowing the same formulations to be used without modification through model-dependent precomputed coefficients. These properties also identify a broader class of compatible noise models and objectives, as detailed in Appendix~\ref{app:convexity}. We next introduce the physical models and entanglement protocols underlying the framework.

\section{Entangled States, Noise Models and Entanglement Protocols}
\label{sec:ent_models_and_protocols}

In the following we present the entangled states, the noise models and the primitive entanglement protocols of swapping and purification.  

\subsection{Entangled States and Noise Models}
\label{sec:noise_models}
The routing framework developed in this work is based on entangled pairs that form links, elementary or virtual. The target state of an entangled pair is the maximally entangled Bell state
\begin{equation}
\ket{\beta_0}=\ket{\Phi^+}
=\frac{1}{\sqrt{2}}\left(\ket{00}+\ket{11}\right).
\label{EPR}
\end{equation}
Imperfections in entanglement generation, transmission, storage, and local operations generally transform the ideal pair into a mixed state $\rho$, characterized by the fidelity with respect to the target Bell state,
\begin{equation}
F(\rho)=\bra{\beta_0}\rho\ket{\beta_0}.
\end{equation}
For the Bell-diagonal noise models, considered in the following, the resulting states form one-parameter families fully specified by fidelity $F$. Fidelity, thus, provides a complete characterization of the state within each assumed noise model. Furthermore, swapping and purification (Section~\ref{sec:quantum_protocols}) preserve the state family, so the fidelity description covers both elementary and virtual links. We will, consequently, use fidelity to characterize the quality of an entangled pair throughout this work.  

We consider two Pauli noise models that are commonly used in quantum networks. Under symmetric Pauli noise, the three Pauli errors occur with equal probability, resulting in a Werner state. Under single-Pauli-error noise, only one error channel is retained, resulting in a two-Bell-state diagonal state. These two models are particularly useful here because they exhibit qualitatively different purification behavior while both admitting a single-parameter description in terms of the fidelity $F$. Their explicit density matrices and the corresponding channel descriptions are given in Appendix~\ref{app:quantum}. 

For the subsequent analysis, it is convenient to reparameterize the fidelity as
\begin{equation}
\xi_W(F)=\frac{4F-1}{3},
\qquad
\xi_{2B}(F)=2F-1,
\label{eq:polarization}
\end{equation}
for Werner and two-Bell-state diagonal states \footnote{The symmetric Pauli noise model leads to Werner states, and the single-Pauli-error noise model leads to two-Bell-state diagonal states. The terms are used interchangeably when we refer to noise models or states.}, respectively. We refer to $\xi$ as the \emph{Bell parameter} and use the same symbol whenever a statement applies to either state family. For both families the state is entangled if and only if $F>1/2$, i.e., $\xi>0$; at $F\le1/2$ the state is separable, and neither swapping nor purification can recover entanglement from it. This parametrization is especially useful because entanglement swapping composes multiplicatively in $\xi$, allowing end-to-end fidelity to be expressed linearly after a logarithmic transformation, as shown in Section~\ref{sec:quantum_protocols}.

Quantum-memory decoherence and imperfections of local operations can be incorporated into the description. Memory decoherence is included through the time-dependent elementary-pair fidelity introduced in Section~\ref{sec:operation_model}, while operation noise can be represented by a reduction of the Bell parameter by a factor $\eta\le1$ (Section~\ref{sec:quantum_protocols}). In the numerical results of Section~\ref{sec:results}, we assume ideal local operations and absorb source and operation imperfections into the baseline link fidelity $F_e^{(0)}$, similar to common entanglement-routing models~\cite{li2022_qpath_qleap,xiao2024_psc}.

\subsection{Entanglement Protocols}\label{sec:quantum_protocols}

\begin{figure}
\centering
\includegraphics[width=\linewidth, trim={0 0.8cm 0 .8cm}, clip]{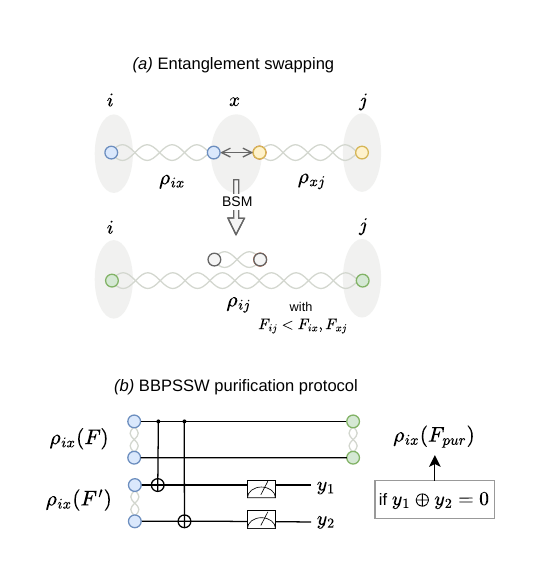}
\caption{ \textit{(a)} Entanglement swapping. A Bell-state measurement (BSM) at node $x$ transforms the adjacent entangled pairs $\rho_{ix}$ and $\rho_{xj}$ into the virtual pair $\rho_{ij}$. \textit{(b)} BBPSSW purification. Bilateral CNOT operations followed by measurement act on two entangled pairs, $\rho_{ix}(F)$ and $\rho_{ix}(F')$, spanning the same link, producing upon successful post-selection a purified pair $\rho_{ix}(\mathrm{F}_{\mathrm{pur}})$. Curly lines denote shared entanglement, while dots represent matter qubits stored in quantum memories.}\label{fig:2}
\end{figure}

Throughout this work, along a linear chain of $L$ elementary links $e=1,\ldots,L$, a (quantum) \emph{link} is the entanglement shared by two nodes and consists of one or more entangled pairs; it is \emph{elementary} if the nodes are adjacent and generate its pairs directly, \emph{virtual} if established by swapping. \emph{Raw pairs} are the pairs of an elementary link before purification. Swapping and purification act on pairs and thereby transform links. The optimization framework operates on two fundamental entanglement-processing primitives: entanglement swapping, which extends entanglement across non-neighboring nodes creating virtual links by consuming consecutive pairs, and entanglement purification, which probabilistically creates one pair of higher fidelity by consuming lower-fidelity entangled pairs. In this section, we summarize the action of both protocols (Fig.~\ref{fig:2}) for the two noise models considered in this work and derive the corresponding fidelity and success-probability update rules. These update rules constitute the physical layer on which we describe the purification schedules in Section~\ref{sec:schedules} upon which the optimization framework of Section~\ref{sec:algorithms} is built.

\subsubsection{Entanglement Swapping}

Entanglement swapping, originally proposed by \.{Z}ukowski \emph{et al.}~\cite{Zukowski1993} and experimentally demonstrated by Pan \emph{et al.}~\cite{Pan1998}, enables the establishment of long-distance entanglement by concatenating consecutive elementary links. 
Consider two consecutive elementary links $(i,x)$ and $(x,j)$ meeting at an intermediate node $x$, each carrying an entangled pair. A local Bell-stae measurment (BSM) is performed at node $x$ on the two qubits belonging to the respective pairs. This operation consumes the two constituent entangled pairs and establishes a new entangled pair directly between the distant endpoints $i$ and $j$. The resulting pair spans a \emph{virtual link} $(i,j)$, as illustrated in Fig.~\ref{fig:2}~(a).

For both Werner and two-Bell-state diagonal input states, the output remains within the same state family. Consequently, the action of the protocol is completely characterized by the corresponding fidelity update rule of the resulting (virtual) link. For Werner inputs,
\begin{equation}
F_{ij}=\mathrm{F}_{\mathrm{swap}}(F_{ix},F_{xj})=\frac{1-F_{ix}-F_{xj}+4F_{ix}F_{xj}}{3},
\label{swap_fidelity}
\end{equation}
while for two-Bell-state diagonal inputs,
\begin{equation}
F_{ij}=\mathrm{F}_{\mathrm{swap}}(F_{ix},F_{xj})=1-F_{ix}-F_{xj}+2F_{ix}F_{xj}.
\label{swap_fid}
\end{equation}
For equal input fidelities, $F_{ix}=F_{xj}=F$, a Werner output remains entangled only if
$F>(1+\sqrt{3})/4\approx0.683$. In contrast, for two-Bell-state diagonal states, the swapped pair remains entangled for every $F\geq1/2$; the single-Pauli-error noise model therefore exhibits no swapping threshold beyond the entanglement threshold.

Although Eqs.~(\ref{swap_fidelity}) and (\ref{swap_fid}) differ when expressed in terms of fidelity, they take an identical multiplicative form in the Bell parameter introduced in Section~\ref{sec:noise_models},
\begin{equation}
\xi_{ij}=\eta\,\xi_{ix}\,\xi_{xj},
\label{eq:swap_polarization}
\end{equation}
where $\eta=1$ for ideal local operations and $\xi$ denotes either $\xi_W$ or $\xi_{2B}$ according to the adopted noise model. This representation is central in the optimization framework developed in Section~\ref{sec:algorithms}, as it renders end-to-end fidelity composition multiplicative along the repeater chain.

\subsubsection{Purification}\label{sec:purification}

Entanglement purification probabilistically converts multiple low-fidelity entangled pairs into fewer pairs of higher fidelity --- a vital operation for preserving entanglement, and thus, a working link, under the aforementioned entanglement $F>1/2$ constraint. In this work, we assume the use of the BBPSSW protocol~\cite{Bennett1996}, which consumes two entangled pairs to produce one of improved fidelity. Given two pairs spanning the \emph{same} link---elementary or virtual---with fidelities $F$ and $F'$, bilateral CNOTs are applied and the target qubits are measured. If the measurement outcomes coincide, the remaining pair is retained with updated fidelity and success probability; otherwise both pairs are discarded.

For Werner states (symmetric Pauli errors), repeated application of BBPSSW requires an isotropic twirling step to restore the Werner form (Appendix~\ref{app:quantum}). For input fidelities $F,F'>1/2$, the output fidelity and success probability are
\begin{equation}
\begin{split}
&\mathrm{F}_{\mathrm{pur}}(F,F') = \frac{1-F-F'+10FF'}{5-2F-2F'+8FF'}, \\
&\mathrm{Pr}_{\mathrm{succ}}(F,F')
= \frac{5-2F-2F'+8FF'}{9}.
\end{split}
\label{W_pur}
\end{equation}

For two-Bell-state diagonal states (single-Pauli error), the output remains within the same state family, no twirling is required, and
\begin{eqnarray}
\mathrm{F}_{\mathrm{pur}}(F,F')&=&\frac{FF'}{1-F-F'+2FF'},\nonumber\\
\mathrm{Pr}_{\mathrm{succ}}(F,F')&=&1-F-F'+2FF'.
\label{2B_pur}
\end{eqnarray}

In both noise models, purification is probabilistic and symmetric in its two inputs, increasing fidelity at the expense of entangled-pair resources and success probability. These update rules quantify the fundamental trade-off between fidelity and throughput that underlies the optimization framework developed in Section~\ref{sec:algorithms}.

\section{Purification Schedule and Analytical Results}
\label{sec:schedules}

The BBPSSW purification protocol specifies how two entangled pairs are combined into a higher-fidelity pair, which is a primitive operation. Routing in quantum networks requires decisions on \emph{which} pairs to combined and in \emph{what order}. This choice is determined by the purification \emph{schedule} and constitutes one of the two fundamental design axes of entanglement-routing, alongside the order of purification and swapping operations. In this section, we introduce the schedule classes considered in this work and establish analytical results that characterize their capabilities. In particular, we derive feasibility bounds for pumping and tree schedules, and analyze the role of operation order under the throughput objective.

\subsection{Pumping and Tree Schedules}

Since each purification combines exactly two input pairs into one output pair, a purification schedule on a link can be represented as a rooted binary tree (Fig.~\ref{fig:xtra}). The leaves correspond to the raw entangled pairs initially available on a link, while each internal tree node represents one purification operation applied to the two pairs produced by its children.

We assume that the raw pairs of an elementary link $e$ share one baseline fidelity $F_e^{(0)}$ and that at most $c_e$ such pairs are available per routing cycle (Section~\ref{sec:operation_model}). A schedule is one tree with $b\le c_e$ leaves (raw pairs) and one root (a purified pair); its root fidelity and success probability depend on the tree's shape. A link runs the same tree $\lfloor c_e/b \rfloor$ times in parallel --- a forest of identical trees whose roots are the link's purified pairs, available for swapping --- so the link's width is $W_e=\lfloor c_e/b \rfloor$, and the $c_e \bmod b$ leftover raw pairs are unused.

The set of all binary trees defines the class of \emph{tree schedules}. Among the many possible schedules, two canonical constructions are important: pumping and balanced trees.

\begin{figure}
\centering
\includegraphics[width=\linewidth, trim={0.5cm 0.5cm 0.5cm 0.5cm}, clip]{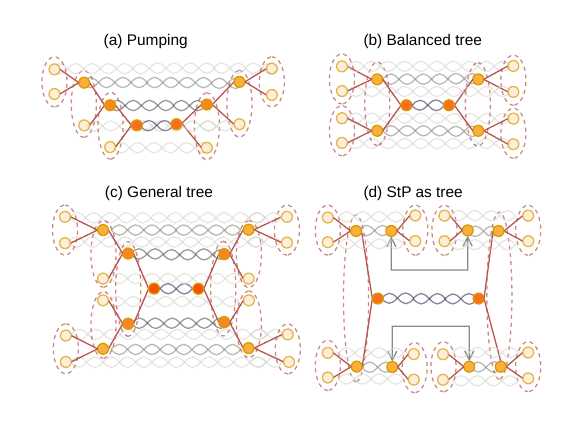}
\caption{ Examples of purification schedules represented as rooted binary trees. Leaves correspond to raw entangled pairs, while internal nodes represent BBPSSW purification operations. (a) Pumping repeatedly purifies one stored pair against freshly generated pairs, resulting in a maximally unbalanced tree. (b) A balanced tree recursively purifies pairs of equal purification depth. (c) A general tree schedule, of which pumping and balanced trees are particular instances. (d) An example of swapping before purifying (StP) having, under the single-Pauli-error model, similar effect as a balanced tree purification schedule --- initial pairs are purified, then swapped (gray arrows) and the final pairs are purified.}\label{fig:xtra}
\end{figure}

In \emph{pumping}---the maximally unbalanced tree---one stored pair is iteratively purified against a raw pair~\cite{dur1999_quantum_repeaters}. Writing $F_{\mathrm{pump},e}^{(k)}$ and $P_{\mathrm{pump},e}^{(k)}$ for the fidelity and cumulative success probability after $k$ pumping purification rounds, with $P_e^{(0)}=1$, the pumping recursion gives
\begin{eqnarray}
&F_{\mathrm{pump},e}^{(k+1)}=\mathrm{F}_{\mathrm{pur}}(F_{\mathrm{pump},e}^{(k)},F_e^{(0)}),\nonumber\\
&P_{\mathrm{pump},e}^{(k+1)}=P_{\mathrm{pump},e}^{(k)}\ 
\mathrm{Pr}_{\mathrm{succ}}(F_{\mathrm{pump},e}^{(k)},F_e^{(0)}),
\end{eqnarray}
where $\mathrm{F}_{\mathrm{pur}}$ and $\mathrm{Pr}_{\mathrm{succ}}$ are given by Eqs.~(\ref{W_pur}) or (\ref{2B_pur}). After $k$ rounds, pumping consumes $b=k+1$ raw entangled pairs, so a link pumps $\lfloor c_e/(k+1)\rfloor$ pairs to depth $k$ in parallel --- the pumping-in-parallel scheme of~\cite{chen2024_jsac}. Any literature work that selects a per-link purification \emph{depth} implicitly optimizes over the pumping class~\cite{li2022_qpath_qleap, zhang2025_link_config, ni2025_joint}.

In a \emph{balanced tree} (also referred to as \emph{self-purification}, or \emph{symmetric} purification in~\cite{chen2024_jsac}), two pairs that were produced by equally deep balanced trees (have equal fidelity $F_{\mathrm{bal},e}^{(\cdot)}$) are purified against one another. The recursion becomes
$F_{\mathrm{bal},e}^{(m+1)}=\mathrm{F}_{\mathrm{pur}}(F_{\mathrm{bal},e}^{(m)},F_{\mathrm{bal},e}^{(m)})$, $P_{\mathrm{bal},e}^{(m+1)}=(P_{\mathrm{bal},e}^{(m)})^2\ \mathrm{Pr}_{\mathrm{succ}}(F_{\mathrm{bal},e}^{(m)},F_{\mathrm{bal},e}^{(m)})$, with $F_{\mathrm{bal},e}^{(0)}=F_e^{(0)}$, while the raw-pair consumption doubles at every level, so that $m$ levels require $b=2^m$ raw pairs. Unlike pumping, where one input is always pinned at the baseline fidelity $F_e^{(0)}$, both inputs are themselves balanced trees recursively created with the same process. As Proposition~\ref{prop:fixed_points} shows, these two recursions converge to fundamentally different limits.

Throughout this paper we consider two \emph{schedule classes} over which optimization is performed: the \emph{pumping class}, in which schedules are completely specified by the purification depth $k$, and the \emph{tree class}, comprising the full family of binary purification trees. Pumping and balanced trees are simply two particular members of the latter. Consequently, optimization over the tree class searches over all feasible purification schedules rather than over a structured tree sub-class.

\subsection{Purification limits and feasibility bounds}

We define an entanglement-routing request as \emph{feasible} under a given method (schedule class and operation order) if the network can establish at least one end-to-end entangled pair whose fidelity $F_{\mathrm{e2e}}$ satisfies $F_{\mathrm{e2e}} \ge F_{\mathrm{th}}$, using the available resources (raw entangled pairs). The required fidelity threshold $F_{\mathrm{th}}$ depends on the application. 

We now investigate how the underlying noise model influences the effectiveness of different purification schedules. Under the single-Pauli-error noise model, the purification map is associative in a transformed variable (Appendix~\ref{app:dp_optimality}; see also~\cite{chen2024_jsac}), so every tree with the same number of leaves produces the same output state (Lemma~1 of~\cite{chen2024_jsac}) and, under the per-copy success accounting of Section~\ref{sec:algorithms}, the same cumulative success probability (Appendix~\ref{app:dp_optimality}). Consequently, pumping, balanced trees, and arbitrary tree schedules are equivalent in fidelity, success probability, and raw-pair cost: the number of raw pairs is the only parameter that matters. Under symmetric Pauli noise, however, associativity is lost and the shape of the purification tree directly affects the achievable fidelity. For example, starting with baseline fidelity $F_e^{(0)}=0.8$, pumping four raw pairs achieves $F=0.864$, but purifying according to a balanced tree (Fig.~\ref{fig:xtra}) achieves $F=0.874$; the corresponding fidelities resulting from eight raw pairs are $0.870$ and $0.905$. These observations suggest that pumping converges to a finite limit, whereas balanced trees continue to improve. Proposition~\ref{prop:fixed_points} establishes this formally.

\begin{proposition}[Purification limits]\label{prop:fixed_points}
Let $F_e^{(0)}>1/2$ and $\eta=1$. \textit{(i)} Under the single-Pauli-error model the pumping fidelity $F_{\mathrm{pump},e}^{(k)}$ increases monotonically in $k$ and converges to $1$. \textit{(ii)} Under the symmetric Pauli noise model it increases monotonically to a limit $F_{\mathrm{pump}}^{\ast}(F_e^{(0)})<1$, the unique root in $(F_e^{(0)},1)$ of a quadratic given in Appendix~\ref{app:convexity}. \textit{(iii)} Under both models the balanced-tree fidelity $F_{\mathrm{bal},e}^{(m)}$ increases monotonically in $m$ and converges to $1$.
\end{proposition}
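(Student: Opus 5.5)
All three parts reduce to one-dimensional iterations $x_{n+1}=g(x_n)$ on $(1/2,1]$. For each I plan to show three things: $g(x)>x$ on the relevant interval, $g$ is increasing, and $g$ maps the interval into itself. Monotone convergence then gives a limit, which must be a fixed point of $g$ by continuity. So the whole argument is to locate the fixed points of each map.

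For \textit{(i)}, I would write the single-Pauli-error update in odds form. With $r=F/(1-F)$, Eq.~(\ref{2B_pur}) gives
\begin{equation*}
\mathrm{F}_{\mathrm{pur}}(F,F')/(1-\mathrm{F}_{\mathrm{pur}}(F,F'))=rr',
\end{equation*}
because $1-\mathrm{F}_{\mathrm{pur}}=(1-F)(1-F')/(1-F-F'+2FF')$. This is the associativity noted above. Pumping then gives $r^{(k)}=r_0^{k+1}$, where $r_0=F_e^{(0)}/(1-F_e^{(0)})>1$. Hence $r^{(k)}$ increases strictly to $\infty$, so $F_{\mathrm{pump},e}^{(k)}\uparrow 1$. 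The balanced tree gives $r^{(m)}=r_0^{2^m}$, which settles the single-Pauli half of \textit{(iii)} in the same way.

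For \textit{(ii)}, set $g(x)=\mathrm{F}_{\mathrm{pur}}(x,F_0)$ from Eq.~(\ref{W_pur}) with $F_0=F_e^{(0)}$. I would first check that $g$ is increasing in $x$, using the sign of the derivative of a Möbius map in $x$: the numerator of $\partial_x g$ is $(10F_0-1)(5-2F_0)-(1-F_0)(8F_0-2)$, which I expect to be positive for $F_0>1/2$. Next, the fixed-point equation $x(5-2x-2F_0+8xF_0)=1-x-F_0+10xF_0$ becomes the quadratic
\begin{equation*}
q(x)=(8F_0-2)x^2+(6-12F_0)x-(1-F_0)=0.
\end{equation*}
I would evaluate this quadratic at the key points. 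At $x=1$, $q(1)=3-3F_0>0$, so $g(1)<1$; this shows that $1$ is not a fixed point, and this fact is the whole reason the ceiling is strictly below $1$. At $x=F_0$ I would show $q(F_0)<0$, which is equivalent to $g(F_0)>F_0$. The leading coefficient $8F_0-2$ is positive, so $q$ has exactly one root $F^\ast$ in $(F_0,1)$. On $[F_0,F^\ast)$ we have $q<0$, hence $g(x)>x$, and since $g$ is increasing, $g(x)<g(F^\ast)=F^\ast$. The iterates therefore increase and stay below $F^\ast$, so they converge to $F^\ast$.

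The Werner case of \textit{(iii)} is the main obstacle, because the map $h(x)=\mathrm{F}_{\mathrm{pur}}(x,x)$ has no clean invariant. The fixed-point equation $h(x)=x$ factors, and I would show its roots are $x=1/2$, $x=1$, and $x=1/4$. Concretely, $x(5-4x+8x^2)-(1-2x+10x^2)$ should equal $(x-1)(2x-1)(4x-1)$ up to a constant, which I expect to confirm after dividing out. Hence $h(x)-x$ has constant sign on $(1/2,1)$. Checking a single point, for example $h(0.8)=0.874>0.8$ from the text, gives $h(x)>x$ on the whole interval. I also need $h$ increasing on $(1/2,1]$ with $h(1)=1$, so that the iterates stay in $(1/2,1)$. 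The only candidate limit in $(1/2,1]$ is then $1$, and $F_{\mathrm{bal},e}^{(m)}\uparrow1$. If the monotonicity of $h$ is awkward to prove directly, I would instead use only $h(x)>x$ together with $h(x)\le 1$, obtained from $\mathrm{F}_{\mathrm{pur}}\le1$, since a bounded increasing sequence already converges.
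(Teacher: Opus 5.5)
Your proposal is correct and follows essentially the same route as the paper. It uses the odds (closed-form) argument for (i) and for the single-Pauli half of (iii), the same fixed-point quadratic with sign checks at $F_e^{(0)}$ and $1$ for (ii), and the factorization $h(F)-F=(1-F)(4F-1)(2F-1)/(5-4F+8F^{2})$ for the Werner half of (iii), with your explicit bound $h(x)\le 1$ supplying a boundedness step the paper leaves implicit. One small slip: the text's $0.874$ is the two-level value $h(h(0.8))$, whereas $h(0.8)=5.8/6.92\approx0.838$; this still exceeds $0.8$, and the sign is immediate from the factorization anyway, which holds exactly rather than up to a constant.
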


Proposition~\ref{prop:fixed_points} has a direct consequence for the feasibility of pumping-based solutions. Under the symmetric Pauli noise model, which emerges as a natural benchmarking context by not presupposing a dominant error channel, every pumped link satisfies $F_{\mathrm{pump},e}^{(k)}<F_{\mathrm{pump}}^{\ast}(F_e^{(0)})$ at any depth $k$ (Proposition~\ref{prop:fixed_points}(ii)); since swapping composes Bell parameters multiplicatively by Eq.~\eqref{eq:swap_polarization} and is lossless only at $\xi=1$, the per-link limits compose into a bound on every PtS sequence of the pumping class:
\begin{equation}
F_{\mathrm{e2e}} \;\le\; \frac{1}{4} + \frac{3}{4}\prod_{e=1}^{L} \xi_{W}\!\left(F_{\mathrm{pump}}^{\ast}\!\left(F_e^{(0)}\right)\right).
\label{eq:pts_ceiling}
\end{equation}
For $F_e^{(0)}=0.8$ we obtain $F_{\mathrm{pump}}^{\ast}=0.870$, and assuming the same $F_e^{(0)}=0.8$ over $L=4$ links, a ceiling of $0.601$; for $F_e^{(0)}=0.9$ we correspondingly obtain $0.943$ per link and $0.798$ end-to-end (e2e). A fidelity threshold above the bound of Eq.~\eqref{eq:pts_ceiling} renders every PtS solution of the pumping class infeasible, irrespective of the depths selected. The bound is therefore determined entirely by the raw elementary pair fidelities $F_e^{(0)}$ and is independent of available link capacities. Section~\ref{sec:results} shows that it predicts, exactly, the measured blocking of every PtS order for the pumping class.

\subsection{Escaping the Pumping Bound}

The pumping bound of the symmetric Pauli noise model can be escaped in two ways, each relaxing one of the two design restrictions in the two dimensions considered in this work: the restriction to pumping purification schedule and the restriction to the PtS operation order. These two mechanisms can be combined.

\subsubsection{Tree Schedules}
The first mechanism enlarges the purification schedule class by allowing general tree schedules rather than pumping. By Proposition~\ref{prop:fixed_points}\textit{(iii)}, tree schedules can asymptotically approach unit fidelity when raw pairs are available. Their achievable fidelity is therefore limited only by the available link capacity rather than by the raw pair fidelity. 
Let $\Phi_e(c)$ denote the highest fidelity attainable on link $e$ using at most $c_e$ raw pairs.  Since $\mathrm{F}_{\mathrm{pur}}$ is increasing in both operands (Appendix~\ref{app:dp_optimality}), $\Phi_e(c_e)$ obeys the recursion $\Phi_e(1)=F_e^{(0)}$, $\Phi_e(c_e)=\max_{1\le a<c_e} \mathrm{F}_{\mathrm{pur}}\bigl(\Phi_e(a),\Phi_e(c_e-a)\bigr)$, and the resulting per-link maxima compose into a capacity-\emph{dependent} end-to-end bound on every PtS sequence of the tree class:
\begin{equation}
F_{\mathrm{e2e}} \;\le\; \frac{1}{4} + \frac{3}{4}\prod_{e=1}^{L} \xi_{W}\!\bigl(\Phi_e(c_e)\bigr).
\label{eq:tree_ceiling}
\end{equation}
Unlike Eq.~\eqref{eq:pts_ceiling}, this bound recedes as capacity grows; Section~\ref{sec:results} verifies both.
Note that both Eq.~\eqref{eq:pts_ceiling} and Eq.~\eqref{eq:tree_ceiling} assume the PtS order. In contrast, post-swap purification acts on the composed end-to-end pair, which, by Proposition~\ref{prop:fixed_points}\textit{(iii)}, can itself be purified beyond any composition of per-link maxima. Section~\ref{sec:results} quantifies how rarely such interleaved sequences actually exceed these bounds.

\subsubsection{Post-swap purification}

Whereas tree schedules maximize the fidelity attainable on individual elementary links prior to swapping, post-swap purification operates on \emph{virtual links} established across distant nodes. How to numerically optimize this was studied in~\cite{vecino2026_strategy}.

Consider an arbitrary virtual link $(i,j)$ formed by swapping operations along a sub-path of width $W$, whose pairs have fidelity $F$ and cumulative success probability $P$ (with $P=1$ if no probabilistic purification preceded). Note we suppress subscript ($ij$) for simplicity. The expected throughput over this virtual link is given by $W \cdot P$. Then assume a round of $\lfloor W/2\rfloor$ post-swap purifications execute in parallel, each consuming two pairs and producing one of higher fidelity, or none on failure. The resulting virtual link has fidelity $\mathrm{F}_{\mathrm{pur}}(F,F)$, available width $\lfloor W/2\rfloor$, and updated cumulative success probability $P^2\,\mathrm{Pr}_{\mathrm{succ}}(F,F)$. The expected throughput of the post-swap purified link is therefore scaled by a factor of
\begin{equation}
\frac{\lfloor W/2\rfloor \cdot P^2\,\mathrm{Pr}_{\mathrm{succ}}(F,F)}{W \cdot P} \;\le\; \frac{1}{2}\,P\,\mathrm{Pr}_{\mathrm{succ}}(F,F) \;<\; 1.
\end{equation}
Because $P \le 1$ and $\mathrm{Pr}_{\mathrm{succ}}(F,F) < 1$ for any state, this scaling factor is strictly less than unity. Consequently, post-swap purification strictly reduces the expected throughput. It can, however, increase the fidelity and carry an otherwise infeasible configuration across the required fidelity threshold $F_{\mathrm{th}}$.

Operation order and purification schedule are not entirely orthogonal: even when individual links are restricted to pumping, interleaving swapping with post-swap purification synthesizes a distributed, multi-level purification tree across the network. This explains intuitively how post-swap purification escapes the pumping bound. For instance, swapping two pumped pairs with pairs of the next link and subsequently purifying the resulting virtual pairs against one another executes a balanced purification on these virtual pairs (Fig.~\ref{fig:xtra}(d)). By accessing this recursive structure, interleaved operation sequences can drive fidelity toward unity ($F \to 1$), making feasible instances that would otherwise be blocked under the strict PtS pumping bound of Eq.~\eqref{eq:pts_ceiling}. Section~\ref{sec:results} quantifies how often this mechanism is invoked in practice.

\subsection{Implications for the Optimization Framework}

Under the single-Pauli-error noise model, neither escape mechanism is required to overcome a pumping limit, since Proposition~\ref{prop:fixed_points}\textit{(i)} shows that pumping itself converges to unit fidelity. Consequently, pumping and tree schedules are equivalent in terms of feasibility, and any differences between them arise only in throughput. Under symmetric Pauli noise, however, the two escape mechanisms play different roles: tree schedules expand the feasible region and can also improve throughput on jointly served instances by saving raw-pair consumption; in contrast, post-swap purification expands feasibility alone, as it inevitably reduces throughput to increase fidelity above otherwise unreachable targets.

These analytical observations motivate the optimization framework developed in the next section. The proposed formulations can be instantiated for both noise models and for both purification schedule classes, MILP covers only PtS operation order while DP further supports arbitrary orders. Section~\ref{sec:results} systematically evaluates all combinations, allowing the individual impact of the noise model, purification schedule, and operation order to be assessed independently.
 


\section{Network Operation Model}
\label{sec:operation_model}

The network operates in discrete routing cycles. Within each cycle, operations proceed in two phases. In the \emph{generation phase}, of duration $\tau_{\mathrm{gen}}$, adjacent physical nodes (e.g. connected with fibers), which operate as repeaters, establish raw entangled pairs across each elementary link; the hardware provisions pairs at a baseline generation rate $R_{\mathrm{gen},e}$, so that elementary link $e$ accumulates up to $c_e = \lfloor R_{\mathrm{gen},e} \cdot \tau_{\mathrm{gen}} \rfloor$ raw pairs, assuming that there are available buffers. In the \emph{operation phase}, of duration $\tau_{\mathrm{op}}$, nodes apply purification and swapping to the accumulated pair pool to deliver end-to-end pairs. The cycle length is $\tau_{\mathrm{slot}} = \tau_{\mathrm{gen}} + \tau_{\mathrm{op}}$. This phased, "stop-and-wait" operation model (Fig.~\ref{fig:3}) is common practice in entanglement routing~\cite{li2022_qpath_qleap, wang2026_sftrap, koutsopoulos2024_milcom}.

\begin{figure}
\centering
    \includegraphics[ width=\linewidth, trim={1.5cm 0 0cm 0cm}, clip]
    {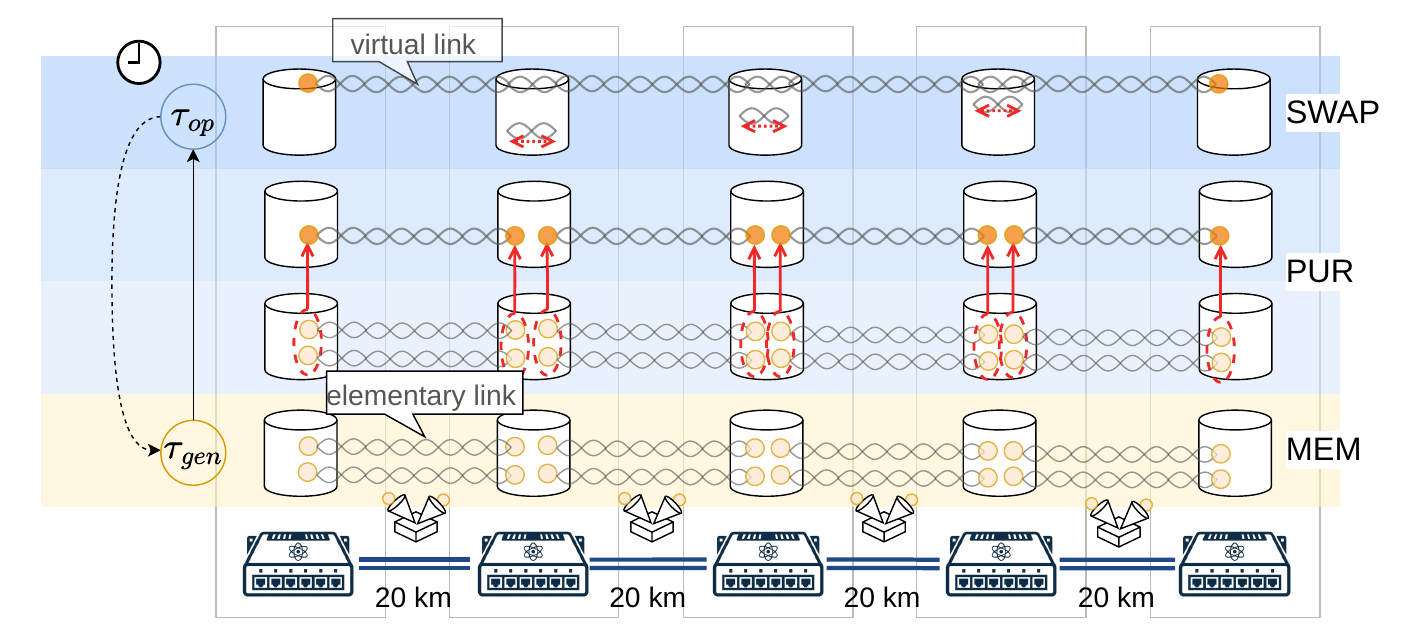}
    \caption{Network operation model. Entangled pairs generated at rate $R_{\mathrm{gen},e}$ during the generation phase of duration $\tau_{\mathrm{gen}}$ are stored in memory and later used for purification and entanglement swapping in the operation phase of duration $\tau_{\mathrm{op}}$, yielding an effective link capacity $c_e=\lfloor R_{\mathrm{gen},e}\cdot \tau_{\mathrm{gen}} \rfloor$. Then, the cycle of duration $\tau_{\mathrm{slot}}$ repeats.}
    \label{fig:3}
\end{figure}

The operation phase is dominated by classical communication: every purification and swap requires exchanging measurement outcomes between link endpoints at medium propagation speed, for fiber $v_f \approx 2\times 10^8$~m/s. Swapping and post-swap purification propagate outcomes across virtual links up to $L_{\mathrm{total}}/v_f$. Because each post-swap purification round halves available width, at most $k_{\mathrm{post}} \le \lfloor \log_2 \min_e c_e \rfloor$ such rounds occur along any root-to-leaf path. For maximum per-link depth $k_{\max}$, the operation phase is bounded (if operations are done sequentially) by
\begin{equation}
\tau_{\mathrm{op}}\;\lesssim\; k_{\max}\frac{L_0}{v_f}\;+\;\left(1+k_{\mathrm{post}}\right)\frac{L_{\mathrm{total}}}{v_f}.
\label{eq:top}
\end{equation}
In our simulations, pumping-based methods cap the per-link depth at $k_{\max}=20$, whereas tree schedules are bounded by the link capacity alone ($b \le c_e$). Neither limit is active in the reported results: the deepest selected pumping schedule uses $10$ purification rounds and the deepest tree schedule has sequential depth $9$. Over the parameter ranges of Section~\ref{sec:results} ($L \le 6$, $c_e \le 150$, $L_0 = 20$~km), Eq.~\eqref{eq:top} gives $\tau_{\mathrm{op}} \le 6.8$~ms, against the $\tau_{\mathrm{gen}} = 10$~ms we set.

For fidelity, both durations are evaluated against the memory coherence time $\tau_{\mathrm{coh}}$. We model decoherence through a worst-case approximation where pairs remain in memory for the full generation phase, yielding a baseline fidelity on link $e$ of
\begin{equation}\label{eq:baseline-fidelity}
F_e^{(0)} = \mu_e ^{\left(-\tau_{\mathrm{gen}} /{\tau_{\mathrm{coh}}}\right)} ,
\end{equation}
where $\mu_e$ is the link source fidelity and $\tau_{\mathrm{coh}}$ is the memory coherence time. 
Decoherence accrued during the operation phase is bounded by $\exp(-\tau_{\mathrm{op}}/\tau_{\mathrm{coh}})$ and neglected. In the simulations, we fix $\tau_{\mathrm{gen}}=10\text{ ms}$ and $\tau_{\mathrm{coh}}=1.46\text{ s}$, a value within the range reported for contemporary quantum memories~\cite{liu2026longlived, liu2024memorymemory, Ortu2022, LagoRivera2021}. The studied capacities $c_e \in [10, 150]$ correspond to generation rates between $1$ and $15\text{ kpairs/s}$, realistic for current and emerging generators. Thus, the modeled fidelity discount is $1-e^{-\tau_{\mathrm{gen}}/\tau_{\mathrm{coh}}} \approx 0.7\%$, and the neglected operation-phase discount is at most $0.5\%$. Both are small because $\tau_{\mathrm{gen}}, \tau_{\mathrm{op}} \ll \tau_{\mathrm{coh}}$, satisfying the operational regime assumed by the model. The algorithms of Section~\ref{sec:algorithms} operate directly on $c_e$ and $F_e^{(0)}$.

The duration of the operation phase play a role in the duty cycle $\tau_{\mathrm{gen}}/\tau_{\mathrm{slot}}$, and hence directly affects throughput (ratio of delivered end-to-end pairs over time). A pipelined or asynchronous architecture~\cite{yang2024_async, haldar2025_quasilocal}, in which generation continues during the operation phase, would raise that rate. It would also invalidate the notion of a fixed pool, and with it, the width variable on which the formulations of Section~\ref{sec:algorithms} rest. We thus retain the stop-and-wait model here and plan to study the pipelined regime in future work.


\section{Problem Formulation and Algorithms}
\label{sec:algorithms}

We formulate the capacity-aware routing problem on a repeater chain, and then present a framework to solve it.

\subsection{Problem Formulation}\label{sec:formulation}

We model the quantum repeater chain as a linear network with node set $V=\{1,\ldots,N\}$ and elementary link set $E=\{(i,i+1)\mid 1\leq i<N\}$, with $L=|E|$. Nodes $1$ and $N$ denote the end-to-end source and destination. For algebraic convenience, we index elementary links interchangeably by their endpoints $(i,i+1)$ or by index $e\in E$. As established in Section~\ref{sec:operation_model}, each elementary link $e$ is characterized by a capacity $c_e=\lfloor R_{\mathrm{gen},e}\ \tau_{\mathrm{gen}}\rfloor$ of generated (raw) pairs and a baseline fidelity $F_e^{(0)}$, and these constitute the inputs to the controller running the algorithms discussed here.

The controller establishes end-to-end entanglement by executing the two primitive operations, each consuming two input pairs to produce a single output pair:

\begin{itemize}
    \item \textbf{Entanglement Swapping ($\textsc{SWAP}$):} Consumes two entangled pairs spanning $(i,x)$ and $(x,j)$ (which may be elementary or virtual) to establish a new entangled pair across the virtual link $(i,j)$.
    \item \textbf{Entanglement Purification ($\textsc{PUR}$):} Consumes two entangled pairs spanning the same endpoints $(i,j)$ to yield a single pair of higher fidelity. When $j = i+1$, purification operates on pairs of an elementary link, raw or previously purified (according to the purification schedule); when $j > i+1$, it operates on pairs established across a virtual link (post-swap purification).
\end{itemize}
Every operand is either a raw pair on an elementary link or the output of a prior operation. A routing solution is therefore formalized as a \emph{routing tree} whose leaves represent raw entangled pairs and whose internal nodes represent $\textsc{SWAP}$ or $\textsc{PUR}$ operations. For example, a balanced tree corresponds to $\textsc{PUR}(\textsc{PUR}(r_1,r_2),\textsc{PUR}(r_3,r_4))$. A routing tree builds one end-to-end pair; as with the link-level forests of Section~\ref{sec:schedules}, it is run in parallel as many times as the raw pairs of its links allow, and post-swap purifications consume copies of the same subtree. The number of parallel copies is the end-to-end width $W$. When referring to operations at the link level, we adopt the endpoint shorthand $\textsc{SWAP}(i,x,j)$ and $\textsc{PUR}(i,j)$. Any linear execution sequence $s=[o_1,\ldots,o_M]$ of these operations that respects the tree's data dependencies constitutes a \emph{routing sequence}.

Executing $s$ delivers $W(s)$ end-to-end pairs (width) of fidelity $F_{\mathrm{e2e}}(s)$ with cumulative success probability $P(s)$ per pair. The routing objective is to maximize the expected end-to-end throughput
\begin{equation}
\max_{s}\; T_{\mathrm{e2e}}(s)= W(s)\cdot P(s)
\label{eq:objective}
\end{equation}
subject to the fidelity constraint $F_{\mathrm{e2e}}(s)\geq F_{\mathrm{th}}$.

Maximizing \eqref{eq:objective} over valid routing sequences $s$ requires navigating the two design axes formalized in Section~\ref{sec:schedules}. \emph{Operation order}: Under purify-then-swap (PtS), all elementary operations $\textsc{PUR}(i,i+1)$ strictly precede any $\textsc{SWAP}$, and no post-swap purification $\textsc{PUR}(i,j)$ ($j > i+1$) occurs. Relaxing PtS admits arbitrary interleavings, enabling post-swap purification across virtual links. \emph{Purification schedule}: For each elementary link $e \in E$, the schedule maps capacity $c_e$ to an operational triple $(F_e, W_e, P_e)$ denoting post-purification fidelity, available width, and cumulative success probability. Under the pumping class, the routing decision reduces to a depth vector $\mathbf{k} = [k_1,\ldots,k_L]$, with $b_e = k_e + 1$ raw pairs per purified pair and $W_e = \lfloor c_e / b_e \rfloor$. Under the tree class, each link selects one entry of its choice set $K_e$ (Section~\ref{sec:milp}). Our formulations apply the same operations to every pair of a link, so every link, elementary or virtual, is described by a single $(F_e, W_e, P_e)$ triple; solutions that treat pairs of one link differently lie outside the framework (Appendix~\ref{app:dp_optimality}).

This structural distinction organizes the algorithmic framework that follows. Fixing the operation order to PtS, we solve \eqref{eq:objective} exactly for both schedule classes using a single Mixed-Integer Linear Program (MILP) instantiated with the choice set corresponding to each class (Section~\ref{sec:milp}). For online operation we also provide a fast greedy heuristic (Section~\ref{sec:heuristic}). Admitting arbitrary operation orders expands the search space to multi-hop virtual-link purification, which we optimize exactly using Dynamic Programming (DP) for either schedule class (Section~\ref{sec:dp}).

Under the PtS restriction, the problem exhibits a Multiple-Choice Knapsack structure, which is NP-hard in general~\cite{kellerer2004_knapsack}: each link contributes exactly one schedule from its discrete choice set $K_e$; each choice consumes integer capacity; and each operation contributes additively in the log-Bell domain toward the fidelity threshold. This motivates exact methods that exploit problem structure rather than general-purpose integer solvers.

The formulations below are noise model-agnostic: the noise model enters only through the pre-computed coefficients, and any model and objective satisfying the structural properties of Appendix~\ref{app:convexity} are admissible, including both models of Section~\ref{sec:noise_models}. Collectively, the choice-set abstraction and the methods built on it --- the two MILP instantiations, the greedy heuristic, and the two order-exact DP --- constitute our routing \emph{framework}.

\subsection{Routing Framework}
\label{sec:routing_algorithms}

\subsubsection{Mixed-Integer Linear Programming Formulation}
\label{sec:milp}

We provide an optimal algorithm for the capacity-aware routing problem under the PtS order, based on an MILP formulation. Under PtS, both the fidelity composition of swapping and the cumulative success probability of purification are multiplicative; taking logarithms makes both linear and thus admissible in a MILP.

Under PtS every link completes its purification before any swap, so the per-link purification decision separates from the chain-level selection: we pre-compute, for each link $e$, a discrete \emph{choice set} $K_e$ of candidate schedules, each described by its root fidelity  $F_{e,k}$, forest width $W_{e,k}$, and cumulative success probability $P_{e,k}$, with the derived Bell parameter $\Xi_{e,k}=\xi(F_{e,k})$ of Eq.~\eqref{eq:polarization}. Throughout the formulations, uppercase symbols denote inputs and physical quantities, lowercase symbols decision variables and indices. In the pumping class, $k$ is the purification depth, and $F_{e,k}=F_{\mathrm{pump},e}^{(k)}$, $W_{e,k}=\lfloor c_e/(k+1)\rfloor$, and $P_{e,k}=P_{\mathrm{pump},e}^{(k)}$, ranging over depths that do not exhaust the capacity ($k\le c_e$). In the tree class, $k$ indexes the link's \emph{purification-tree frontier}. For a tree with $b$ leaves the width is fixed (number of trees in the forest), $W_{e,k}=\lfloor c_e/b \rfloor$, while the root fidelity and success probability follow from applying the update rules of Section~\ref{sec:quantum_protocols} at each internal tree node. A triple $(F,W,P)$ \emph{dominates} another if it is at least as large in every component of the triple and strictly larger in one; a dominated schedule can never be preferred, and the frontier retains the non-dominated triples over all trees with $b\le c_e$ (Appendix~\ref{app:dp_optimality} formalizes this). We construct the frontier once per link with the schedule DP of Chen and Jia~\cite{chen2024_jsac}, which enumerates tree shapes on a discretized fidelity grid and guarantees that, for each leaf count, the best achievable root fidelity is missed by at most the grid resolution, $10^{-3}$ in our simulations; the pumping choice set is embedded in the frontier by construction (Section~\ref{sec:schedules}), and in the instances of Section~\ref{sec:results} the frontiers contain of the order of $10^2$ points per link, so $|K_e|$, the number of schedule choices per link, stayed small. The threshold $F_{\mathrm{th}}$ maps to $\Xi_{\mathrm{th}}=\xi(F_{\mathrm{th}})$.

Along the chain the success probabilities $P_{e,k}$ and Bell parameters $\Xi_{e,k}$ of the selected schedules compose multiplicatively, while the widths compose through a minimum, $W=\min_e W_{e,k}$ being the number of end-to-end pairs. The logarithm linearizes the products; the minimum is handled by an auxiliary continuous variable $\omega$ that equals $\log W$ at the optimum. The only decision variables are the binaries $x_{e,k}$, equal to $1$ if link $e$ adopts schedule $k$ in the $K_e$ set. The MILP is formulated as follows:
\begin{equation}
\max \; \omega + \sum_{e \in E} \sum_{k \in K_e} x_{e,k} \, \log P_{e,k}
\label{eq:milp_objective}
\end{equation}
subject to
\begin{align}
C1: \; & \sum_{k \in K_e} x_{e,k} = 1, && \forall e \in E, \label{eq:milp_c1} \\
C2: \; & \omega \leq \sum_{k \in K_e} x_{e,k} \, \log W_{e,k}, && \forall e \in E, \label{eq:milp_c2} \\
C3: \; & \sum_{e \in E} \sum_{k \in K_e} x_{e,k} \, \log \Xi_{e,k} \;\geq\; \log \Xi_{\mathrm{th}}. \label{eq:milp_c3}
\end{align}
The objective is the logarithm of Eq.~\eqref{eq:objective} under PtS, $\log T_{\mathrm{e2e}}=\log W+\sum_e \log P_{e,k(e)}$, where $k(e)$ is the schedule selected on link $e$, since the chain's success probability is the product of the selected schedules' probabilities. $C1$ selects exactly one schedule per link. $C2$ bounds $\omega$ by the minimum provisioned width across the chain, which the objective then drives to equality. $C3$ enforces the end-to-end fidelity threshold as a linear constraint on the log-Bell parameters, by Property~2 of Appendix~\ref{app:convexity}. The formulation is identical for the two purification schedule classes; only the candidate schedules and their coefficient triples $(W_{e,k},P_{e,k},\Xi_{e,k})$ change. We refer to the two instantiations as the pumping-based and tree-based MILP.

The underlying resource allocation is NP-hard in general, but the MILP is tractable at the scales of interest. Pre-computing the logarithmic coefficients strips the physical layer out of the constraint matrix and decouples the solver from both the noise model and the schedule class. In practice, the linear relaxation is tight and both instantiations solve in milliseconds for the chain lengths we evaluate (Section~\ref{sec:results}).

\subsubsection{Throughput-First Greedy Heuristic}
\label{sec:heuristic}

Exact formulations like the MILP defined above may be unusable to a controller operating under strict timing constraints or serving many concurrent demands. Thus, we also give a greedy heuristic for the PtS operation order. Baselines such as Q-PATH and Q-LEAP~\cite{li2022_qpath_qleap} allocate purification against uniform per-link fidelity targets, which blinds the controller to the capacity bottleneck and exhausts entangled pairs on heterogeneous links. Instead, we select in each step the link maximizing the capacity-aware marginal utility
\begin{equation}
u_e = \frac{\Delta \log \xi_e}{\left|\Delta T_{\mathrm{e2e}}\right|},
\label{eq:utility}
\end{equation}
the ratio of the marginal gain $\Delta \log \xi_e$ in log-Bell parameter to the magnitude $\left|\Delta T_{\mathrm{e2e}}\right|$ of the expected-throughput reduction that the purification step causes. Placing the throughput penalty in the denominator suppresses purification choices that prematurely reduce the global bottleneck width. Starting from the raw configuration, the heuristic advances, on the link of highest $u_e$, one step along that link's choice set --- the next pumping depth, or the next frontier point in decreasing width --- until $F_{\mathrm{e2e}}\ge F_{\mathrm{th}}$, and repeats the construction for each candidate bottleneck width from $\min_e c_e$ down to $1$, returning the feasible configuration of highest throughput. The outer sweep costs a factor $c_{\mathrm{max}}$ and repairs the greedy choices that a single pass makes at a fixed width (Appendix~\ref{app:counterexample}). The overall complexity is $O(L \cdot c_{\mathrm{max}} \cdot |K_{\mathrm{max}}|)$, with $|K_{\mathrm{max}}|$ the largest per-link choice set; the complexity formula holds for the two classes, with $|K_{\mathrm{max}}|$ taking a different value.


\subsubsection{Dynamic Programming over Arbitrary Operation Orders}
\label{sec:dp}

The MILP is exact under PtS, but a fixed operation order misses valid solutions: a virtual link established by swapping may itself be purified, and deferring purification until after a swap can carry a configuration across the threshold that no PtS allocation reaches (Section~\ref{sec:schedules}). Under PtS each link's schedule tree is flattened into one entry of its choice set, and a solution is a vector of per-link selections--- the variables $x_{e,k}$ the MILP searches. With the operation order free, a solution is a full routing tree of Section~\ref{sec:formulation}, whose internal nodes now include swaps, elementary link and virtual link (post-swap) purifications. We search this space exactly with a DP over contiguous sub-paths.

The \emph{DP state} of the sub-path from node $i$ to node $j$ is a set $\mathcal{C}_{i,j}$ of \emph{configurations} $\sigma=(F,W,P,s)$: the triple $(F,W,P)$ of Section~\ref{sec:milp} --- now the fidelity, width, and cumulative success probability of the virtual link $(i,j)$--- extended by the routing tree $s$ that builds one pair of that link from the raw pairs of elementary links $i,\ldots,j-1$, the unit replicated across the link's width. The tree determines the triple; the DP stores the triple alongside it to avoid re-evaluation, and dominance is tested on the triple only. The DP works with fidelity rather than the Bell parameter: it applies the exact composition maps $\mathrm{F}_{\mathrm{swap}}$ and $\mathrm{F}_{\mathrm{pur}}$ of Section~\ref{sec:quantum_protocols} directly and tests $F\ge F_{\mathrm{th}}$
as is, with no linearization to preserve. No single scalar summarizes a configuration --- a lower fidelity may be preferable if it retains width or probability --- so each $\mathcal{C}_{i,j}$ is a set of non-dominated configurations under the dominance relation, applied to sub-path configurations instead of link schedules (as done in Section~\ref{sec:milp}), rather than a single optimum configuration.

The base pool of each elementary link is its choice set, $\mathcal{C}_{i,i+1}=\{(F_{e,k},W_{e,k},P_{e,k},s_{e,k}):k\in K_e\}$ with $s_{e,k}$ the corresponding routing tree, so the DP and the MILP of a given schedule class evaluate identical link-level options and differ only in the operation order; the frontier filtering of Section~\ref{sec:milp} is the DP's own filter at $l=1$. The recursion merges adjacent sub-paths. For each span $(i,j)$ and each split node $x$ in between, every pair $(\sigma_A,\sigma_B)\in\mathcal{C}_{i,x}\times\mathcal{C}_{x,j}$ is swapped, producing $\bigl(\mathrm{F}_{\mathrm{swap}}(F_A,F_B),\,\min(W_A,W_B),\,P_AP_B,\,\textsc{SWAP}(s_A,s_B)\bigr)$; the result is then post-swap purified while $W\ge2$, each round updating $\sigma$ and producing $\bigl(\mathrm{F}_{\mathrm{pur}}(F,F),\,\lfloor W/2\rfloor,\,P^{2}\,\mathrm{Pr}_{\mathrm{succ}}(F,F),\,\textsc{PUR}(s,s)\bigr)$, and every intermediate configuration is retained, including those below the threshold: the optimal routing may hold an inner span below $F_{\mathrm{th}}$ on purpose, because a sibling span needs its width and probability rather than its fidelity margin, so threshold-based truncation of the purification loop can miss the optimum. Because the operands of a swap occupy disjoint elementary links, no configuration can exceed any link's capacity, and feasibility is not tracked explicitly.

To control the growth of the pool, the dominance filter is applied at each span: a configuration is discarded if another is at least as large in all of $F,W$ and $P$ and strictly larger in one. Dominance is tested on the three components and not on the scalar $T=W P$: widths compose through a minimum and probabilities through a product, so two configurations of equal $T$ need not be equally valuable downstream. Each operation is monotone in $F,W$ and $P$ separately (Appendix~\ref{app:dp_optimality}), which makes the filter sufficiency-preserving --- discarding a dominated configuration never discards the optimum --- and the same monotonicity is what justifies filtering the per-link frontier in Section~\ref{sec:milp}. Once the full chain is evaluated, the DP selects the configuration of the (source,destination) $\mathcal{C}_{1,N}$ set with maximum $T$ among those satisfying $F \geq F_{\mathrm{th}}$, together with its optimal routing tree $s^*$.

\begin{algorithm}
\caption{Dynamic Programming over Arbitrary Operation Orders}
\label{alg:dp}
\begin{algorithmic}[1]
\REQUIRE Linear chain of $L$ links; per-link choice sets $K_e$; threshold $F_{\mathrm{th}}$
\ENSURE Optimal routing tree $s^*$ and throughput $T_{\mathrm{e2e}}$
\STATE $\mathcal{C}_{i,i+1} \leftarrow \{(F_{e,k},W_{e,k},P_{e,k},s_{e,k}) : k\in K_e\}$, $\forall e=(i,i+1)\in E$
\FOR{sub-path length $\ell = 2$ \TO $L$}
   \FOR{start node $i = 1$ \TO $L-\ell+1$, \quad $j \leftarrow i+\ell$}
      \STATE $\mathcal{C} \leftarrow \emptyset$
      \FOR{split node $x = i+1$ \TO $j-1$}
        \FOR{each $(\sigma_A,\sigma_B) \in \mathcal{C}_{i,x} \times \mathcal{C}_{x,j}$}
            \STATE $\sigma \leftarrow \bigl(\mathrm{F}_{\mathrm{swap}}(F_A,F_B),\ \min(W_A,W_B),\ P_AP_B,$ \\
            \hspace*{1.2em} $\textsc{SWAP}(s_A,s_B)\bigr)$; \; $\mathcal{C} \leftarrow \mathcal{C}\cup\{\sigma\}$
            \WHILE{$W(\sigma) \geq 2$}
                \STATE $(F, W, P, s) \leftarrow \sigma$
                \STATE $\sigma \leftarrow \bigl(\mathrm{F}_{\mathrm{pur}}(F,F),\ \lfloor W/2\rfloor,\ P^{2}\,\mathrm{Pr}_{\mathrm{succ}}(F,F),$ \\
                \hspace*{1.2em} $\textsc{PUR}(s,s)\bigr)$; \; $\mathcal{C} \leftarrow \mathcal{C}\cup\{\sigma\}$
            \ENDWHILE
         \ENDFOR
      \ENDFOR
      \STATE $\mathcal{C}_{i,j} \leftarrow \textsc{ParetoFilter}_{(F,W,P)}(\mathcal{C})$
   \ENDFOR
\ENDFOR
\STATE $\sigma^* \leftarrow \arg\max\{\,W\cdot P : \sigma\in\mathcal{C}_{1,N},\ F\ge F_{\mathrm{th}}\,\}$
\RETURN $s^*=s(\sigma^*)$ and $T_{\mathrm{e2e}}=W(\sigma^*)\cdot P(\sigma^*)$
\end{algorithmic}
\end{algorithm}

\begin{theorem}\label{thm:dp}
Instantiated with a per-link choice set, Algorithm~\ref{alg:dp} returns a routing tree of maximum expected throughput among all routing trees that draw each link's schedule from that set and satisfy the fidelity constraint.
\end{theorem}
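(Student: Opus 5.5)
The plan is an induction on sub-path length $\ell=j-i$. The invariant is that $\mathcal{C}_{i,j}$ \emph{covers} every admissible routing tree for span $(i,j)$. A routing tree is admissible if its leaves on each elementary link are grouped according to a schedule drawn from that link's choice set $K_e$. Covering means that for every admissible tree $t$ with triple $(F_t,W_t,P_t)$ there is a $\sigma\in\mathcal{C}_{i,j}$ with $F_\sigma\ge F_t$, $W_\sigma\ge W_t$ and $P_\sigma\ge P_t$.

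Once the invariant holds at $(1,N)$, the theorem follows directly. Take an optimal admissible tree $t^*$ with $F_{t^*}\ge F_{\mathrm{th}}$. Its covering $\sigma$ also satisfies the threshold and has $W_\sigma P_\sigma\ge W_{t^*}P_{t^*}$. Moreover every configuration in $\mathcal{C}_{1,N}$ is itself realized by the admissible routing tree $s$ it stores, so the arg-max is attained by an admissible tree. Soundness of the stored trees is a separate small induction: each $\textsc{SWAP}$ combines trees on disjoint link sets, and each $\textsc{PUR}(s,s)$ consumes two copies of the same subtree, consistent with the width accounting of Section~\ref{sec:formulation}.

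For the base case $\ell=1$, an admissible tree on one link is exactly a schedule in $K_e$. Either it is in $\mathcal{C}_{i,i+1}$, or it is dominated by a frontier point, which is the frontier construction of Section~\ref{sec:milp}.

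For the inductive step, take an admissible tree $t$ on $(i,j)$ with $\ell\ge2$. Its root is either a $\textsc{SWAP}$ or a post-swap $\textsc{PUR}$.
\begin{itemize}
\item \emph{Peeling purifications.} Since every pair of a link undergoes the same operations and post-swap purifications consume copies of the same subtree, a $\textsc{PUR}$ root has two identical children. Repeatedly stripping such roots writes $t=\textsc{PUR}^{r}(\textsc{SWAP}(t_A,t_B))$ for some split node $x$ and $r\ge0$, with $t_A,t_B$ admissible on $(i,x)$ and $(x,j)$.
\item \emph{Covering the swap.} By induction there are $\sigma_A\in\mathcal{C}_{i,x}$ and $\sigma_B\in\mathcal{C}_{x,j}$ dominating $t_A,t_B$ componentwise. $\mathrm{F}_{\mathrm{swap}}$ is nondecreasing in each argument whenever inputs have $F\ge1/2$; by Eq.~\eqref{eq:swap_polarization} it is multiplicative in $\xi\ge0$. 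The $\min$ and product maps are also monotone. So the swapped configuration built by the algorithm dominates $\textsc{SWAP}(t_A,t_B)$.
\item \emph{Covering the $r$ purification rounds.} The algorithm continues the while-loop as long as $W\ge2$. Since $W_\sigma\ge W_t$, and $t$ could perform $r$ rounds only if its width permitted, the loop reaches at least $r$ rounds. I would check that the map $(F,W,P)\mapsto(\mathrm{F}_{\mathrm{pur}}(F,F),\lfloor W/2\rfloor,P^2\,\mathrm{Pr}_{\mathrm{succ}}(F,F))$ is monotone in each component. This uses the monotonicity facts the paper records in Appendix~\ref{app:dp_optimality}. The loop retains all intermediate configurations without threshold truncation, so the $r$-th iterate lies in $\mathcal{C}$ before filtering and dominates $t$.
\item \emph{Filtering.} \textsc{ParetoFilter} only removes configurations strictly dominated by a retained one. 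By transitivity of dominance and finiteness, each removed configuration is dominated by a survivor, so covering survives.
\end{itemize}

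I expect the main obstacle to be the monotonicity of the purification map in $P$ jointly with $F$. The success-probability factor $\mathrm{Pr}_{\mathrm{succ}}(F,F)$ must be nondecreasing in $F$ on $F>1/2$. For Werner this means $5-4F+8F^2$ increasing for $F>1/4$, and for two-Bell-state diagonal states $1-2F+2F^2$ increasing for $F>1/2$. Both hold, but the entanglement condition $F>1/2$ must be maintained along the induction.

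I would handle configurations with $F\le1/2$ separately. They have $\xi\le0$, can never contribute to a threshold-meeting tree, and can be dominated or discarded safely. Without this restriction, the monotonicity of $\mathrm{F}_{\mathrm{swap}}$ and $\mathrm{F}_{\mathrm{pur}}$ can fail, so I would state this case explicitly. Otherwise I would simply invoke Appendix~\ref{app:dp_optimality} for it.
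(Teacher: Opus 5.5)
Your proposal is correct and follows the paper's proof in Appendix~\ref{app:dp_optimality} essentially step for step. It uses the same induction on span length with a componentwise-dominance sufficiency invariant, the same peeling $t=\textsc{PUR}^{r}(\textsc{SWAP}(t_A,t_B))$ (the paper's Lemma~1), and the same monotonicity-plus-width-applicability argument (Lemma~2); your soundness check on stored trees and your transitivity remark for the filter only make explicit what the paper leaves implicit.

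One small slip is in your optional aside: Werner states with $F\in(1/4,1/2]$ have $\xi_W>0$, not $\xi\le0$. No separate case is needed anyway, because the swap, purification and success-probability maps are monotone on the whole reachable range, namely $F\ge1/4$ for Werner and $F\ge1/2$ for two-Bell-state diagonal states, given raw fidelities of at least $1/2$.
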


The proof, in Appendix~\ref{app:dp_optimality}, establishes optimal substructure over split points and shows that the Pareto filter is sufficiency-preserving. The DP evaluates $O(L^2)$ spans, each merging $O(L)$ splits over pairs of retained configurations and appending $O(\log c_{\mathrm{max}})$ post-swap rounds, for a cost of $O(L^3\,\mathcal{C}_{\mathrm{max}}^2\log c_{\mathrm{max}})$ plus the once-per-link frontier construction, where $\mathcal{C}_{\mathrm{max}}=\max_{i,j}|\mathcal{C}_{i,j}|$ is the largest retained Pareto set; it admits no useful a priori bound and is reported alongside runtime in Section~\ref{sec:results}.



\section{Numerical Results}\label{sec:results}

We evaluated the framework on chains whose source fidelities are drawn independently per link as $\mu_e\sim\mathcal{N}(\bar{\mu},\sigma^2)$---heterogeneous across links, uniform within each---with $\bar{\mu}=0.85$ and $\sigma=0.1$, with baseline (raw) fidelities $F_e^{(0)}$ given by Eq.~\eqref{eq:baseline-fidelity} and clipped to $[0.5,0.999]$. Each plotted point aggregates $200$ instances generated from a common seed, so all algorithms face identical draws. Unless swept, the default parameters are $L=4$, $c_e=50$ ($R_{\mathrm{gen},e}=5\times10^{3}$ pairs/s at $\tau_{\mathrm{gen}}=10$~ms), and $F_{\mathrm{th}}=0.70$. We evaluate the framework under both the symmetric Pauli noise model (which leads to Werner states) and the single-Pauli-error model (which leads to two-Bell-state diagonal states); while Section~\ref{sec:results_s2} compares both, our primary results focus on symmetric Pauli noise as the more physically representative regime. Two analytical PtS bounds are verified in the results: the capacity-independent pumping bound of Eq.~\eqref{eq:pts_ceiling} and the capacity-dependent tree bound of Eq.~\eqref{eq:tree_ceiling}.

We compared six algorithms: \emph{MILP (pumping)} and \emph{MILP (tree)}, the two instantiations of Section~\ref{sec:milp} with the pumping and tree choice sets; \emph{DP (pumping)} and \emph{DP (tree)}, the DP of Section~\ref{sec:dp} with the corresponding choice sets; \emph{Heuristic (tree)}, the greedy heuristic of Section~\ref{sec:heuristic} over tree frontiers; and two published baselines --- \emph{Q-PATH}~\cite{li2022_qpath_qleap}, ported with its published selection rule intact and scored, like the other methods, with the objective of Eq.~\eqref{eq:objective}, and \emph{Chen--Jia per-link}~\cite{chen2024_jsac}, which schedules each link optimally but coordinates no width across links. Q-PATH selects a purification depth per link, as MILP (pumping) does, and is therefore compared against it in the fidelity-threshold sweep (Section~\ref{sec:results_s1}); the other five methods appear in every sweep. We cap the depth of the pumping-based methods at $k_{\max}=20$; the cap is never met. 
Because the MILPs optimize logarithmic quantities at finite solver precision, every returned selection is recomposed with the exact equations of Section~\ref{sec:quantum_protocols}; a selection that fails the threshold is excluded by a no-good cut and the MILP is re-solved, so all methods answer to one exact feasibility test.

We report two metrics per operating point. The \emph{blocking probability} is the fraction of the $200$ instances for which no feasible routing exists ($T=0$). The \emph{median served throughput} is the median of $T$ over the instances the method serves, on a logarithmic axis with values below $10^{-3}$ drawn at the marked display floor. Blocking is the primary metric --- a blocked request delivers nothing, a served request of low throughput still delivers pairs --- so we read the throughput results as the price of the feasibility gains; Section~\ref{sec:results_s1} quantifies the one caveat, a served instance falling below an operator's minimum rate. This priority also explains an apparent reversal relative to earlier studies of operation order: under the single-Pauli-error noise model pumping has no fidelity ceiling, so every algorithmic difference must surface in throughput, whereas under symmetric Pauli noise feasibility itself is the contested resource.

\begin{figure*}
    \centering
    \includegraphics[width=\linewidth,height=8.5cm]{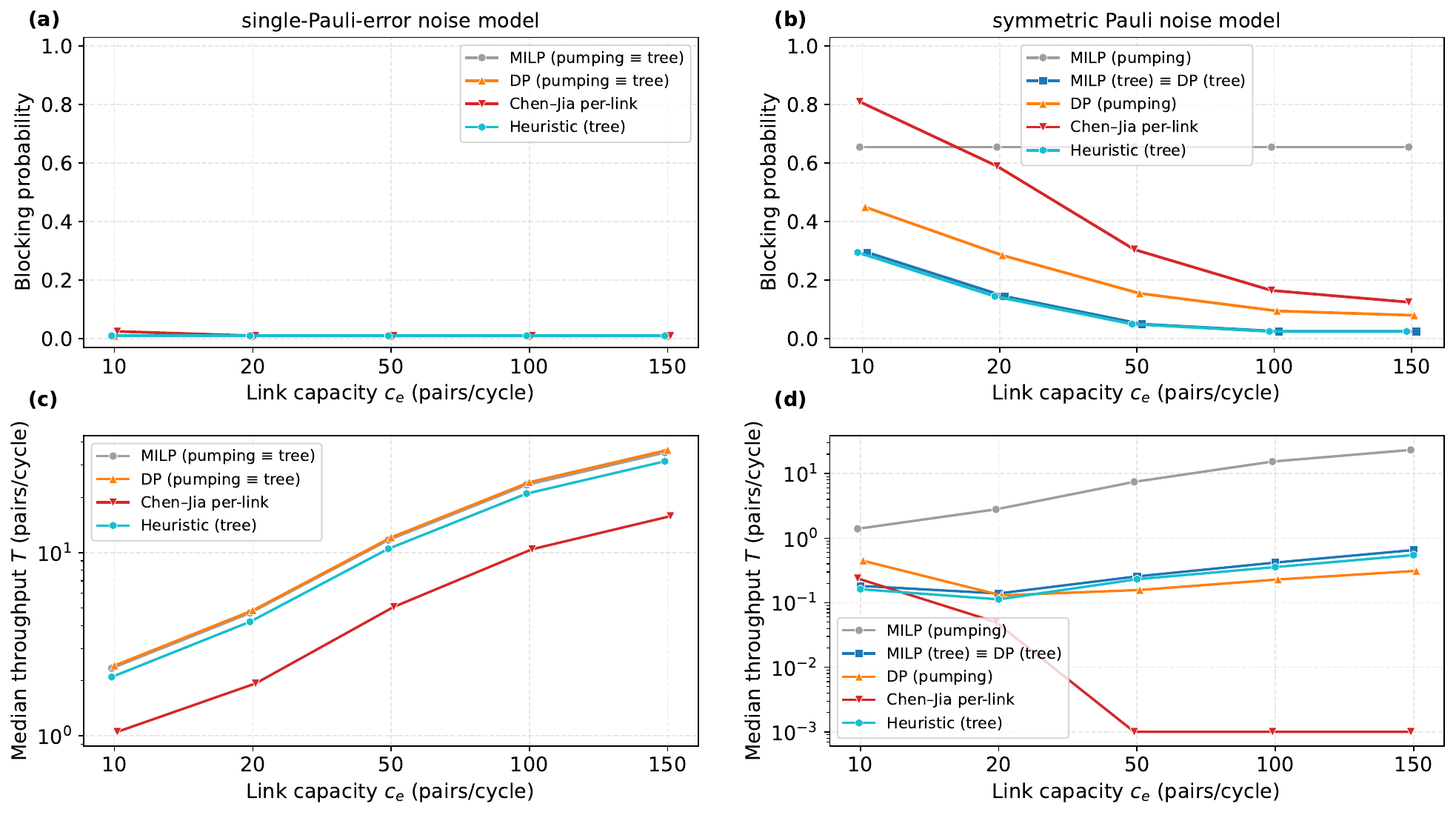}
    \caption{Capacity sweep at the default $L=4$, $F_{\mathrm{th}}=0.70$, under the single-Pauli-error noise model (left) and symmetric Pauli noise model (right): (a),(b) blocking probability; (c),(d) median served throughput (log scale, display floor $10^{-3}$). Merged legend entries ($\equiv$) denote algorithms whose blocked instance sets coincide at every point; in the single-Pauli column each method's pumping and tree instantiations coincide exactly.}
    \label{fig:s2}
\end{figure*}

\subsection{The Two Noise Models}\label{sec:results_s2}

Fig.~\ref{fig:s2} sweeps the capacity from $c_e=10$ to $150$ under both noise models. Under the single-Pauli-error noise model (Fig.~\ref{fig:s2}(a),(c)) the blocking differences between algorithms vanish. The shape of the purification tree is irrelevant under this model (Section~\ref{sec:schedules}), so each method's pumping and tree instantiations coincide --- verified per instance across all $1000$ (five capacities, $200$ instances each) --- and blocking is a flat $\sim1\%$ for every method and capacity. This residual is the two instances in which a link's baseline fidelity was clipped to the lower bound $F_{e}^{0}=1/2$ : such a pair carries no entanglement and purification cannot improve it, so no method serves these instances. Operation order adds nothing measurable either, because the pumping recursion approaches unit fidelity doubly exponentially (Proposition~\ref{prop:fixed_points}\textit{(i)}) and leaves post-swap purification nothing to rescue. To check that this reflects the model rather than the operating point, we raised $F_{\mathrm{th}}$ toward $1$ on a $20$-instance probe: blocking beyond the clipping residual begins only above $F_{\mathrm{th}}=0.999$, and even there DP (pumping $\equiv$ tree) served two probe instances that the PtS optimum could not --- a thin improvement shell of Section~\ref{sec:schedules}. In throughput (Fig.~\ref{fig:s2}(c)) the two exact methods coincide, Heuristic (tree) trails them slightly, and Chen--Jia per-link lies well below, since it schedules each link in isolation and never coordinates width across links. Larger throughput differences between operation orders would require regimes where the required depth competes with capacity (lower baseline fidelities, tighter thresholds, smaller capacities). What this model cannot produce at any operating point is a difference in feasibility.

Under symmetric Pauli noise (Fig.~\ref{fig:s2}(b),(d)) the differences appear. In blocking (Fig.~\ref{fig:s2}(b)), MILP (pumping) blocks an identical $66\%$ at all five capacities, equal to Eq.~\eqref{eq:pts_ceiling}: the bound contains no $c_e$, and no increase in capacity lifts the pumping class over it. The blocking of MILP (tree) $\equiv$ DP (tree) falls from $0.30$ at $c_e=10$ to $0.03$ at $c_e=150$, tracking Eq.~\eqref{eq:tree_ceiling}, with Heuristic (tree) alongside it. DP (pumping) lies between MILP (pumping) and the tree methods at every capacity, since the free operation order recovers part of the tree advantage. The Chen--Jia per-link blocks the most at small capacity and improves steadily with it. Much of what capacity buys the tree methods, however, has low throughput: the instances they serve at $T<10^{-3}$ grow from $3\%$ to $24\%$ of the total, so with a rate requirement of $10^{-3}$ their blocking would fall only from $0.325$ to $0.265$. Capacity mostly converts infeasible requests into feasible ones of low throughput (Fig.~\ref{fig:fan} examines such requirements). In throughput (Fig.~\ref{fig:s2}(d)), MILP (pumping) shows the highest median at every capacity because it serves only the easiest third of the instances (Section~\ref{sec:same_set_throughput} makes the comparison fair). The tree methods and Heuristic (tree) lie close together. DP (pumping) and Chen--Jia per-link lie above them at $c_e=10$, where they too serve only easier instances, and below them from $c_e=20$; Chen--Jia per-link reaches the display floor from $c_e=50$. 

The single-Pauli-error noise model, adopted in the literature for its closed-form composition, thus eliminates most of the effects under study. For this reason, the remaining sweeps use symmetric Pauli noise.

\begin{figure}
    \centering
    \includegraphics[width=\linewidth]{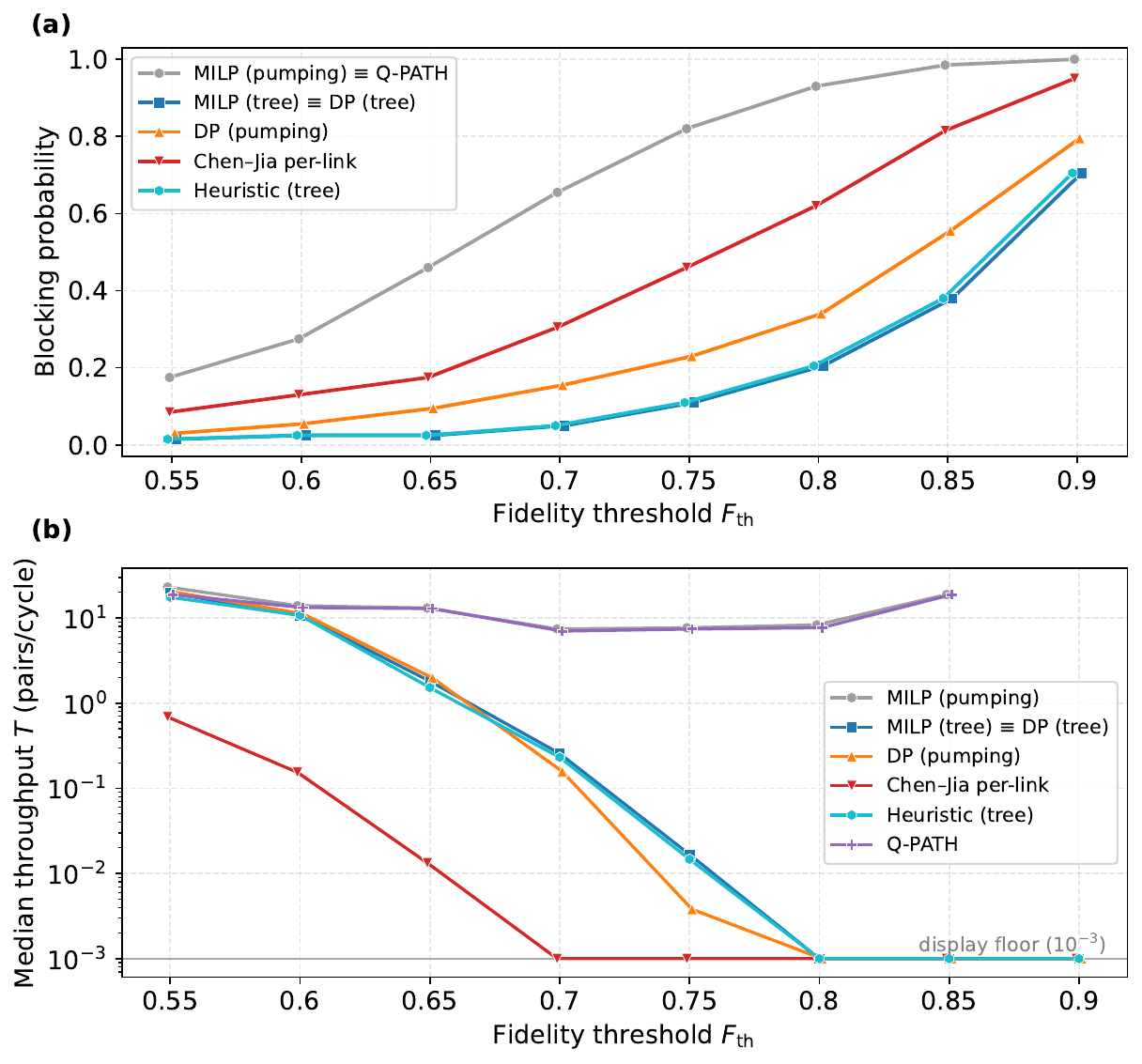}
    \caption{Fidelity-threshold sweep at the default $L=4$, $c_e=50$, symmetric Pauli noise: (a) blocking probability; (b) median served throughput. \mbox{MILP (pumping) $\equiv$ Q-PATH} in (a): identical blocked sets at all eight thresholds --- blocking is a property of the pumping class that no selection rule can move; in (b) Q-PATH's selection rule costs it throughput. \mbox{MILP (tree) $\equiv$ DP (tree)}: identical blocked sets at every threshold.}
    \label{fig:s1}
\end{figure}

\begin{figure}
    \centering
    \includegraphics[width=\linewidth, height=4cm, trim={0.35cm 0 0 0}, clip]{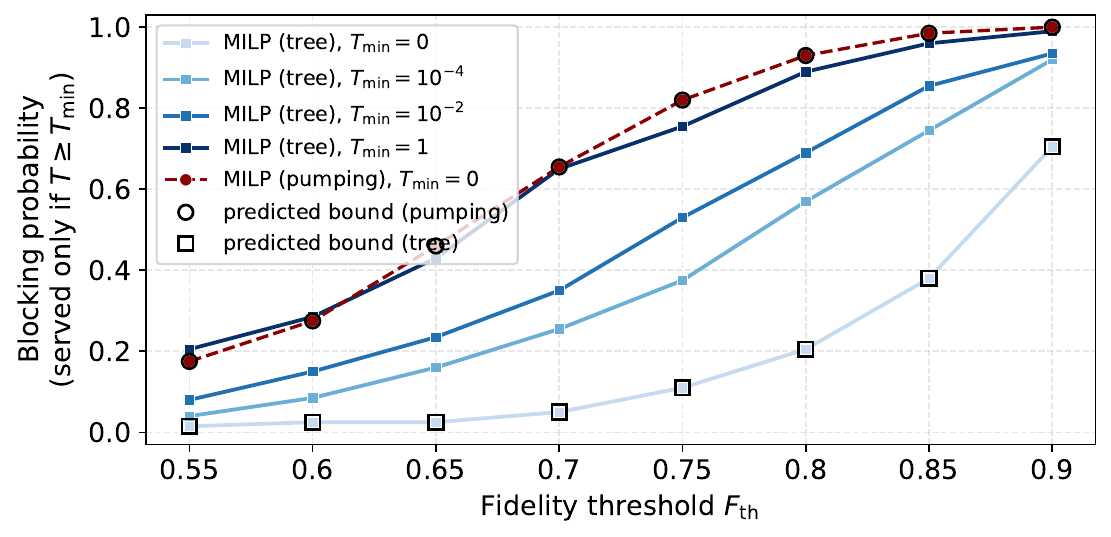}
    \caption{Blocking of MILP (tree) when a served instance must also meet a minimum rate $T_{\min}\in\{0,10^{-4},10^{-2},1\}$ pairs per cycle (data of Fig.~\ref{fig:s1}); hollow markers: the bounds of Eqs.~\eqref{eq:pts_ceiling} and~\eqref{eq:tree_ceiling}. MILP (pumping), dark red, is invariant up to $T_{\min}=10^{-2}$.}
    \label{fig:fan}
\end{figure}

\subsection{Fidelity-Threshold Sweep}\label{sec:results_s1}

Fig.~\ref{fig:s1} sweeps $F_{\mathrm{th}}\in[0.55,0.90]$ under symmetric Pauli noise. In Fig.~\ref{fig:s1}(a), the blocking of MILP (pumping) rises from $0.17$ to $1$ and coincides with Eq.~\eqref{eq:pts_ceiling} at every point. Q-PATH blocks the identical instance set at all eight thresholds and is drawn as the same curve. The reason is simple: both methods choose one purification depth per link, so both explore the same set of depth vectors; whether a feasible vector exists at all is the same question for both, and their selection rules decide only which feasible vector is returned --- at best, the MILP's optimum. Tree schedules move the boundary: MILP (tree) $\equiv$ DP (tree) blocks from $0.015$ to $0.705$; of the instances that MILP (pumping) cannot serve, it serves more than $90\%$ at thresholds up to the default $0.7$, and still $29\%$ at $0.90$, where the tree bound of Eq.~\eqref{eq:tree_ceiling} engages. That DP (tree) blocks the identical set at every threshold is the certificate anticipated in Section~\ref{sec:schedules}: post-swap purification, the only capability the DP adds, contributes no feasibility beyond the frontier. The blocking of DP (pumping) lies between MILP (pumping) and MILP (tree), from $0.03$ to $0.80$: with pumping on the links but a free operation order, it recovers much of the tree advantage without a tree on any link (Section~\ref{sec:results_protocols}). Heuristic (tree) matches the served set of the exact tree optimum at all eight thresholds, within $16\%$ of its median throughput for $F_{\mathrm{th}}\le0.80$, in milliseconds.

Fig.~\ref{fig:s1}(b) shows the price. The tree methods, DP (pumping), and the heuristic fall steeply as the threshold rises, reaching the display floor by $0.80$, because they serve the harder instances near the pumping bound, whose routing tree spend width on depth. MILP (pumping) and Q-PATH, in contrast, stay near $10$ pairs per cycle throughout: they serve only the easy instances, and at $0.85$ only three of them (Section~\ref{sec:same_set_throughput} removes this population effect). As expected, Q-PATH's selection rule costs it a little throughput against MILP (pumping) on jointly served instances, and Chen--Jia per-link loses on heterogeneous draws where the bottleneck link is not the lowest-fidelity one.

A served instance can still fail to be useful if its throughput falls below a required minimum rate $T_{\min}$. An instance thus fails in one of two ways: it is infeasible ($T=0$), or it is feasible but served at $T<T_{\min}$. Fig.~\ref{fig:fan} re-evaluates the blocking of Fig.~\ref{fig:s1}(a), as a function of $F_{\mathrm{th}}$, counting both failures for several values of $T_{\min}$. For the pumping class the two failures coincide: whatever fidelity pumping reaches, it reaches at shallow depth and hence at nearly full width, so the MILP (pumping) curve does not move for any $T_{\min}\le10^{-2}$. The tree class separates them: the instances that pumping cannot serve are served by trees only through deep schedules, each additional tree level halving the width and squaring the success probability, so such an instance may be feasible only at width $1$ and with low success probability. The MILP (tree) curves therefore fan upward as the requirement tightens. The tree advantage of Fig.~\ref{fig:s1}(a) persists for $T_{\min}\le10^{-2}$. At $T_{\min}=1$ --- one delivered pair per cycle, i.e., $100$ pairs/s at $\tau_{\mathrm{slot}}=10$~ms --- the tree curve nearly coincides with the strict pumping curve: this requirement erases most of what tree schedules gain in feasibility.

\begin{figure}
    \centering
    \includegraphics[width=\linewidth,height=11.8cm]{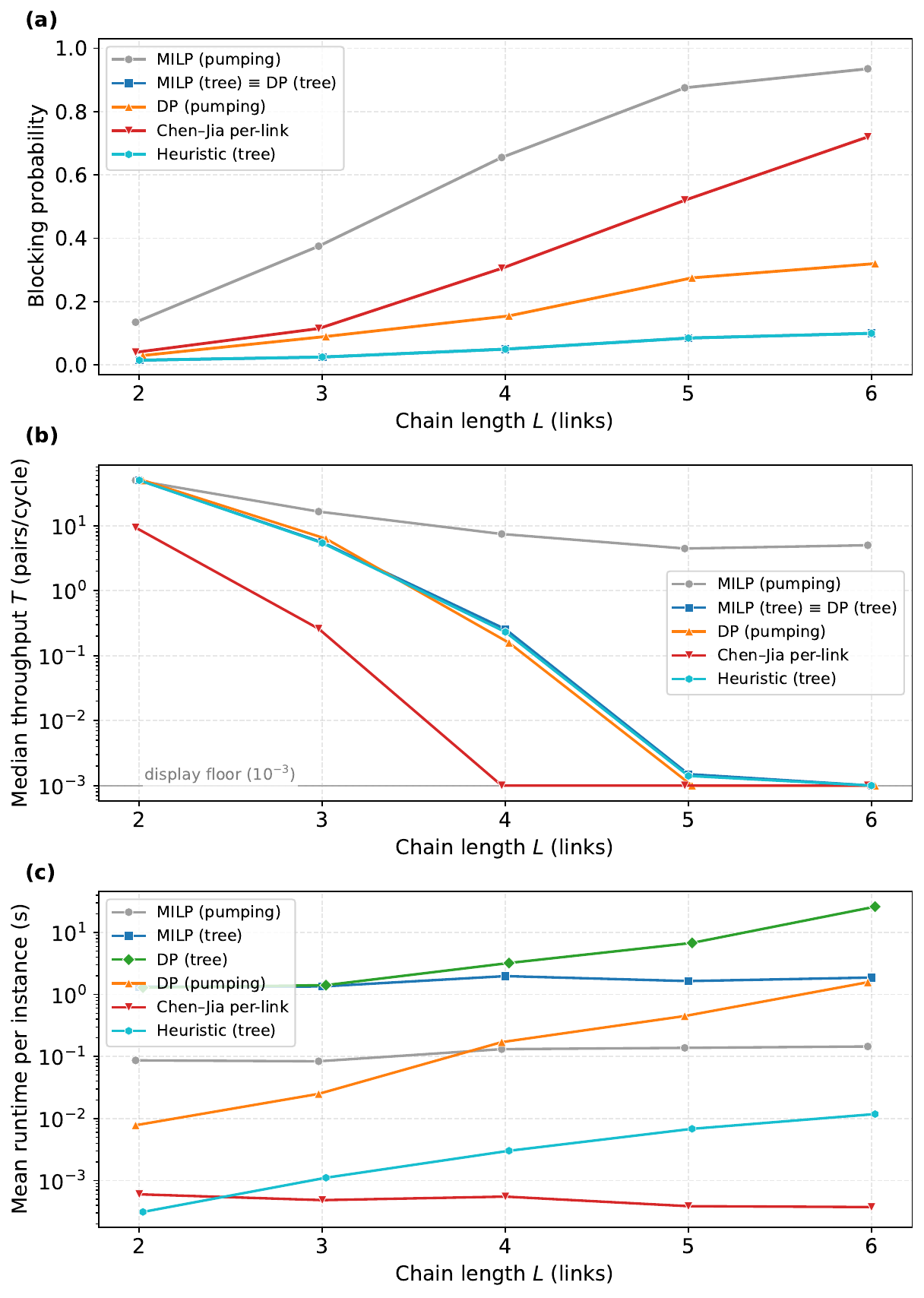}
    \caption{Chain-length sweep at the default $c_e=50$, $F_{\mathrm{th}}=0.70$, symmetric Pauli noise: (a) blocking probability; (b) median served throughput; (c) per-instance runtime (log scale, including the once-per-instance frontier construction for tree methods). MILP (tree) and DP (tree) appear separately in (c): the difference is cost, not solution quality.}
    \label{fig:s3}
\end{figure}

\subsection{Chain Length and Runtime}\label{sec:results_s3}

Fig.~\ref{fig:s3} sweeps $L\in\{2,\ldots,6\}$. Both analytical bounds tighten as Bell parameters multiply across links (Fig.~\ref{fig:s3}(a)): the blocking of MILP (pumping) rises to $0.66$ at $L=4$ and $0.94$ at $L=6$, which defines the maximum chain length of Section~\ref{sec:schedules} for every PtS method of the pumping class, while the blocking of MILP (tree) $\equiv$ DP (tree) rises only from $0.015$ to $0.10$, with identical served sets at every length. The merged MILP (tree) $\equiv$ DP (tree) curve conceals one genuine difference: in Fig.~\ref{fig:s3}(b), DP (tree) exceeds MILP (tree) in median throughput by up to $\sim10\%$ at $L\geq 4$, through mixed routing trees that meet the threshold with shallower link schedules and one post-swap round --- the operation order buying efficiency, not feasibility. Median served throughput itself decays by seven orders of magnitude from $L=2$ to $L=6$, the compounded price of multiplicative success probabilities.

Fig.~\ref{fig:s3}(c) shows the mean runtime per instance. The runtimes of the two DPs grow with the chain length, as their retained Pareto sets grow: DP (pumping) from about $10$~ms to about $2$~s, and DP (tree) from about $1$~s to about $25$~s at $L=6$, where its Pareto sets reach $\mathcal{C}_{\max}=42{,}688$ configurations (one of the $200$ instances exceeded a $6$~GB per-instance memory guard and is excluded from its statistics). The two MILPs stay nearly flat: MILP (pumping) at about $0.1$~s, and MILP (tree) at $1$--$2$~s, since their variables depend on the frontier size. Heuristic (tree) is the fastest tree method, below $15$~ms at every length, which makes it the natural candidate for online operation. DP (tree), needed only to verify that a free operation order adds nothing, is affordable offline and unnecessary online. At $L=8$ DP (tree) exceeds the $32$~GB of the machine used; we disclose the limit rather than report partial runs, and no conclusion depends on that case.

\subsection{Throughput on a Common Instance Set}\label{sec:same_set_throughput}

The median throughput of each method is computed over the instances that the method serves --- a different set per method --- and this hides a population effect: MILP (pumping) serves fewer instances than every other method (Q-PATH excepted, which serves the same set), so the other medians include harder instances it cannot serve, which lowers their throughput. We therefore add a comparison on common ground, the instances the MILP (pumping) serves. On this set every other method returns a solution at least as good as MILP (pumping) on every instance --- the pumping choice set is contained in the tree frontier, and PtS is one of the orders the DP searches --- and we verified this dominance on every run. Fig.~\ref{fig:Tratio} plots the ratio of each method's aggregate throughput to that of MILP (pumping) on the common set for the three sweeps. The gains are real but modest, and they grow with difficulty: under 1\% at loose thresholds and short chains, a few percent near the feasibility boundary, and insensitive to capacity. Because the common set shrinks as the threshold rises, the three highest thresholds in Fig.~\ref{fig:Tratio}(a) use 2000-instances --- which leave at least 47 common instances per point; the open marker at $L=6$ in Fig.~\ref{fig:Tratio}(c) flags fewer than 20 instances. The lower tree medians of the other figures thus reflect no worse solution on any instance, only the harder instances the tree methods additionally serve.

\begin{figure}[t]
    \centering
    \includegraphics[width=\linewidth,height=11.8cm]{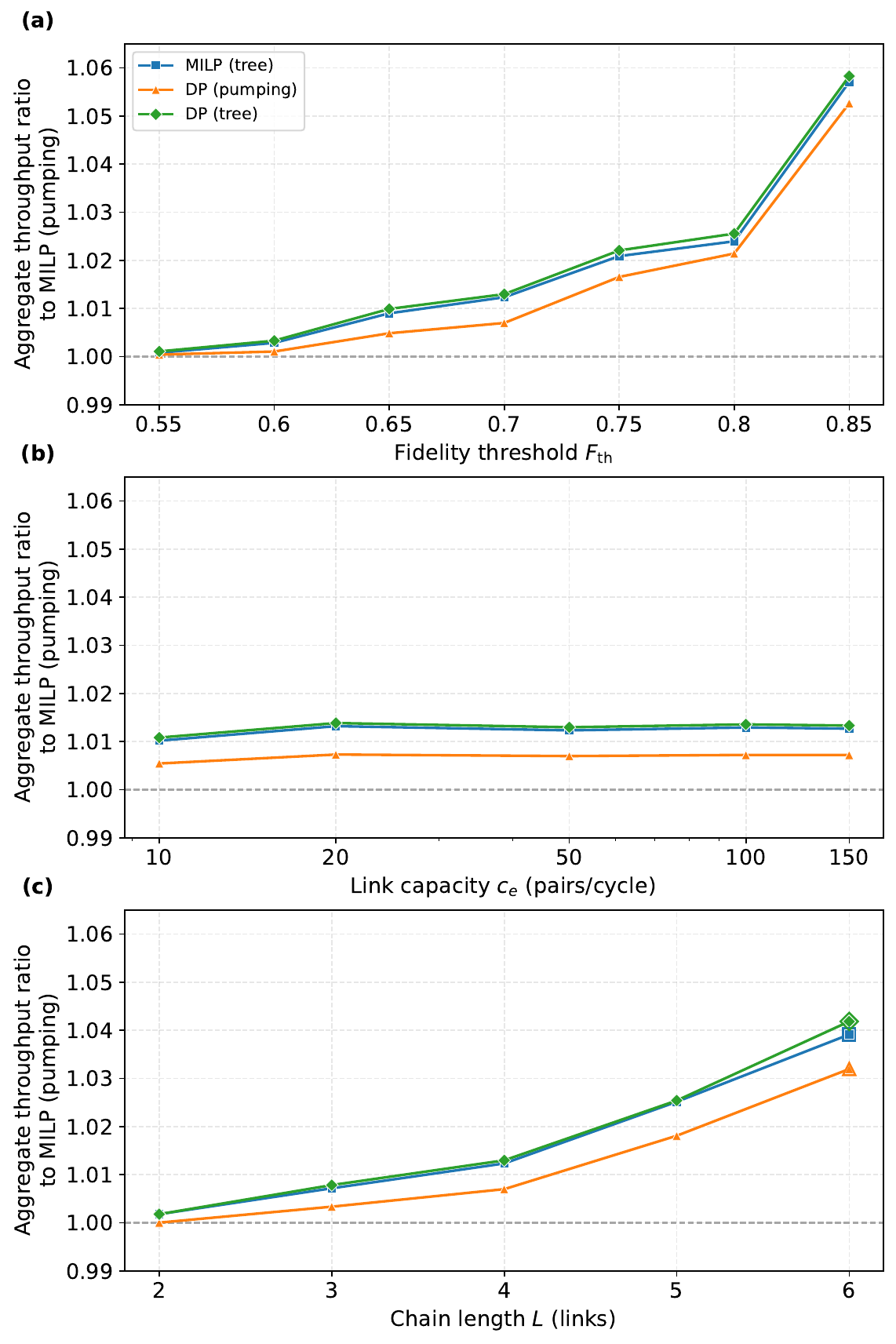}
    \caption{Aggregate throughput relative to MILP (pumping) on the instances MILP (pumping) serves, against (a) the fidelity threshold, (b) the link capacity, and (c) the chain length; other parameters at defaults. In (a), $F_{\mathrm{th}}\geq 0.75$ uses a 2000-instance extension of the same seed (a superset of the 200 used elsewhere); open marker: fewer than 20 common instances.}
    \label{fig:Tratio}
\end{figure}

\subsection{Inside the Routing Trees}\label{sec:results_protocols}

The routing trees returned by the DP show what the free operation order is used for. Within the pumping class, the fraction of served instances whose optimal routing interleaves --- purify, swap, purify the composed pair --- rises from $13\%$ at $F_{\mathrm{th}}=0.55$ to $100\%$ at $0.85$: near the bound, every request strict PtS cannot serve is served by a mixed routing tree. A representative instance at $F_{\mathrm{th}}=0.70$ makes the mechanism concrete: MILP (pumping)'s best allocation serves at $T=0.12$, while DP (pumping) purifies the four links to depths $(2,2,3,2)$, swaps, and applies one post-swap round, meeting the threshold at $T=0.84$ --- a factor of $7.0$ --- by spending the composed pair's width where per-link pumping has saturated.

Read against the tree results, these gains admit a sharper interpretation. Post-swap purification is the self-purification recursion applied after composition (Section~\ref{sec:schedules}); for links restricted to pumping it is the only route to that recursion, and the large ordering gains above --- like those reported empirically in prior work~\cite{victora2023_purification,haldar2025_quasilocal,zang2023_buffer} --- are, on this evidence, chiefly indirect access to the tree class rather than a benefit of the operation order itself. Once links schedule trees directly, the same freedom adds no feasibility at any operating point of Figs.~\ref{fig:s2}--\ref{fig:s3} and at most ${\sim}10\%$ median throughput; its residual feasibility value is a thin shell at the bound. At the scales measured here, operation order and purification schedule are substitutes, and the practical conclusion is the cheaper one: a millisecond tree-frontier MILP, or its heuristic, captures what DP and interleaved routing trees were reaching for.


\section{Conclusions}\label{sec:conclusions}
 
We investigated capacity-aware entanglement routing on quantum repeater chains, developing a unified optimization framework under finite link capacities. By combining analytical feasibility bounds with exact MILP and DP formulations and a low-complexity heuristic, the framework enables systematic optimization of quantum-resource allocation across schedule classes, operation orders, and physical noise models.

Our analysis reveals that the underlying noise model fundamentally affects the attainable fidelity and the value of different routing strategies. Under symmetric Pauli noise, pumping imposes a capacity-independent fidelity ceiling that can be overcome through general purification trees or post-swap purification. Numerical results confirm the analytical bounds and demonstrate the importance of exploiting the available purification choices: tree schedules served over 90\% of pumping-infeasible requests at moderate thresholds, while the tree-based MILP and order-exact DP served identical instance sets across all evaluated operating points. Under single-Pauli-error noise, the distinction between purification schedule classes disappears in terms of attainable fidelity.

Importantly, the proposed framework is not tied to a particular purification schedule, operation order, or noise model. Its formulations exploit common structural properties of entanglement-processing operations, allowing noise models and routing objectives that satisfy those properties to be incorporated through model-dependent parameters without redesigning the optimization algorithms. 

Taken together, our findings show that the structure of purification ---not merely its depth or the number of available entangled pairs--- is a central consideration in capacity-aware quantum-network design. They also suggest that, before introducing more complex operation-order optimization, it is important to understand which fidelity limitations can already be overcome through appropriate link-level scheduling. Next steps include the extension of the framework from chains to general network topologies, including path selection and competition for shared resources. Taken together, the results suggest treating the purification schedule and the operation order as distinct design dimensions: the schedule governs feasibility, and the operation order matters only where the schedules alone fall short.

\bibliographystyle{IEEEtran}
\bibliography{references.bib}

\renewcommand{\appendixname}{Appendix}
\appendices


\section{Bell States, Noise Models, and Twirling}
\label{app:quantum}

\subsection{The Bell basis}

The four Bell states $\{\ket{\Phi^\pm},\ket{\Psi^\pm\}}$ can be generated from the
computational basis as
\begin{equation}
\ket{\beta_{n_1,n_2}}
=\mathrm{CNOT}(H\otimes I)\ket{n_1n_2},
\label{Bell}
\end{equation}
where $n_1,n_2\in\{0,1\}$, $H$ is the Hadamard gate, and
$\mathrm{CNOT}$ is the controlled-NOT gate. Writing
$\ket{\beta_{n_1,n_2}}\equiv\ket{\beta_{2n_1+n_2}}$, we have
$\ket{\beta_0}=\ket{\Phi^+}$,
$\ket{\beta_1}=\ket{\Psi^+}$,
$\ket{\beta_2}=\ket{\Phi^-}$, and
$\ket{\beta_3}=\ket{\Psi^-}$.

Ideally, the two nodes connected by an elementary link share the target Bell state $\ket{\beta_0}$, also known as an EPR pair. In practice,
channel noise, imperfect state preparation, and memory decoherence produce
a mixed state
$\rho=\sum_{n,j}\lambda_{n,j}\ket{\beta_n}\bra{\beta_j}$,
where $\Lambda=(\lambda_{n,j})$ is positive semidefinite with unit trace.
Its fidelity with respect to the target Bell state is
$F(\rho)=\bra{\beta_0}\rho\ket{\beta_0}$.

\subsection{Noise models}

In this work, we adopt the common assumption of Pauli noise
\cite{nielsen2010quantum}, described in terms of the single-qubit
Pauli operators $\{X,Y,Z\}$.

\subsubsection{Asymmetric Pauli errors -- Bell-diagonal states}
We first consider the general case in which a Pauli error acts on one
qubit of the entangled pair. The resulting state can be written as
\begin{equation}
\rho=\sum_{G\in\{I,X,Y,Z\}}
p_G\,(I\otimes G)\ket{\beta_0}\bra{\beta_0}(I\otimes G)^{\dagger},
\label{anoise}
\end{equation}
where $p_G\geq0$ and $\sum_G p_G=1$.
Since the Pauli operators map Bell states onto Bell states,
Eq.~\eqref{anoise} yields a Bell-diagonal state
\begin{equation}
\rho_B(\vec{p})
= p_I\ket{\beta_0}\bra{\beta_0}
+ p_X\ket{\beta_1}\bra{\beta_1}
+ p_Z\ket{\beta_2}\bra{\beta_2}
+ p_Y\ket{\beta_3}\bra{\beta_3},
\label{bel}
\end{equation}
with $\vec p=(p_I,p_X,p_Z,p_Y)$ and fidelity
$F=\bra{\beta_0}\rho_B\ket{\beta_0}=p_I$.


\subsubsection{Symmetric Pauli Noise Model -- Werner states}
In this model,  the three nontrivial Pauli errors occur with
equal probability, such that $p_I=F$ and
$p_X=p_Y=p_Z=(1-F)/3$. The resulting state is a Werner state, given by
\begin{equation}
\rho_W(F)
= F\ket{\beta_0}\bra{\beta_0}
+ \frac{1-F}{3}\sum_{j=1}^{3}\ket{\beta_j}\bra{\beta_j}.
\label{Wer}
\end{equation}


\subsubsection{Single-Pauli-Error Model --- Two-Bell-State Diagonal States}
In this model, only one type of nontrivial Pauli error occurs.
Taking this error to be $X$, we have $p_I=F$, $p_X=1-F$, and
$p_Y=p_Z=0$.  The resulting state reduces to a mixture of two Bell states,
\begin{equation}
\rho_{2B}(F)
= F\ket{\beta_0}\bra{\beta_0}
+ (1-F)\ket{\beta_1}\bra{\beta_1}.
\label{2Bell}
\end{equation}


The symmetric Pauli and single-Pauli-error noise models represent different
physical noise regimes. Symmetric Pauli noise provides an effective
description when the accumulated imperfections have no strongly preferred
Pauli component, so that the error is distributed among the three
nontrivial Pauli operators. The resulting Werner state therefore provides
a natural model for approximately isotropic noise. In contrast, the
two-Bell-state diagonal states represent a strongly asymmetric noise regime in
which single Pauli error type dominates. Considering both models allows us
to distinguish the effect of the overall fidelity from that of the
underlying error structure.

\subsection{Bell parameter and operation noise}

For both noise models used in this work, the Bell parameters introduced in
Eq.~\eqref{eq:polarization} provide a convenient representation of the
corresponding state families:
\begin{equation}
\begin{split}
&\rho_W(F)
= \xi_W\ket{\beta_0}\bra{\beta_0}
+(1-\xi_W)\frac{I}{4},\\
&\rho_{2B}(F)
= \xi_{2B}\ket{\beta_0}\bra{\beta_0}
+(1-\xi_{2B})\frac{\Pi_2}{2},
\end{split}
\label{eq:normal_forms}
\end{equation}
where $\Pi_2=\ket{\beta_0}\bra{\beta_0}+\ket{\beta_1}\bra{\beta_1}$,
and
\begin{equation}
\xi_W(F)=\frac{4F-1}{3},
\qquad
\xi_{2B}(F)=2F-1.    
\end{equation}
 Concurrence is an entanglement measure for two-qubit states,
ranging from zero for separable states to unity for maximally
entangled states \cite{Wootters1998}.
In the entangled regime, $\xi_{2B}$ coincides with the concurrence
of the two-Bell-state diagonal state, whereas for Werner states
the concurrence is $C=\max\{0,(3\xi_W-1)/2\}$.
Most importantly for our analysis, entanglement swapping has a simple
multiplicative form when expressed in terms of $\xi$.

Imperfections in quantum operations can be incorporated through an additional
reduction of the Bell parameter, $\xi\mapsto\eta\xi$, where
$0\leq\eta\leq1$ characterizes the operation quality. This representation
is exact when the operation noise preserves the corresponding state family.
For the Werner family, symmetric Pauli noise preserves the  family and
acts as $\xi_W\mapsto\eta\xi_W$. Similarly, for the two-Bell-state diagonal  states,
single-Pauli-error noise preserves the  family and can be
represented as $\xi_{2B}\mapsto\eta\xi_{2B}$. Thus, the noise associated
with quantum operations must be compatible with the assumed state family
if the same one-parameter description is to be preserved. In particular,
the two-Bell-state diagonal family is not closed under symmetric Pauli noise,
which distributes population over all four Bell states, whereas it is
closed under the corresponding single-Pauli-error noise.


\subsection{Isotropic twirling}

The BBPSSW purification protocol (see Sec.~\ref{sec:ent_models_and_protocols})
is formulated for Werner input states, Eq.~\eqref{Wer}, but its output
is generally Bell-diagonal, Eq.~\eqref{bel}, rather than Werner.
Repeated application therefore requires transforming the output back to
the Werner family before each purification round. This is achieved by
the isotropic twirling
\begin{equation}
\mathcal{T}(\rho)
= \int dU\,(U\otimes U^{*})\,\rho\,
(U^{\dagger}\otimes U^{T}),
\label{eq:twirl}
\end{equation}
where the integral is taken over the Haar measure of $SU(2)$ and $U^{*}$
denotes complex conjugation in the computational basis. Because
$\ket{\beta_0}$ is invariant under $U\otimes U^{*}$, the twirl preserves
the target-state population while averaging the remaining Bell-state
populations, mapping Eq.~\eqref{bel} onto Eq.~\eqref{Wer} with $F=p_I$.
Twirling therefore preserves the fidelity. Moreover, when
$p_I=p_{\max}$, it also preserves the concurrence of the Bell-diagonal
state.

The states of Eq.~\eqref{Wer} are the $U\otimes U^{*}$-invariant, or
isotropic, states. They are related by a local unitary to the
$U\otimes U$-invariant states originally introduced by Werner, and the
quantum-networking literature commonly refers to both families as Werner
states; we follow that convention. In practice, the twirling can be
implemented by averaging over a finite set of bilateral unitaries rather
than over the continuous Haar measure.


\section{Structural Properties Underlying the Optimization Framework}
\label{app:convexity}

The optimization framework of Section~\ref{sec:algorithms} (MILP, heuristic, DP) apply to both the symmetric Pauli (Werner quantum states) and to the single-Pauli-error noise models (two-Bell-state diagonal quantum states) because they rely on three structural properties: \textit{(i)} diminishing returns of purification (used only by the heuristic), \textit{(ii)} log-linearity of swap composition in the Bell parameter, and \textit{(iii)} objective compatibility. This appendix proves Proposition~\ref{prop:fixed_points} and verifies these properties for both noise models; any noise model and objective satisfying them can be incorporated without modifying the framework.

Throughout we use the Bell parameters $\xi_W(F)=(4F-1)/3$ and $\xi_{2B}(F)=2F-1$ introduced in Eq.~\eqref{eq:polarization} and discussed in Appendix~\ref{app:quantum}. Both lie in $[0,1]$ in the entangled regime where ($F>1/2$). We write $\xi$ when a statement holds for either state family.

\subsection{Property 1. Purification Limits and Strict Concavity}
\label{app:convexity:concavity}
Consider pumping purification on a link with raw-pair baseline fidelity $F^{(0)}$ (dropping link indices for readability). Let $F_{\mathrm{pump}}^{(k)}$ denote the fidelity after $k$ pumping rounds. For the single-Pauli-error noise model, Eq.~(\ref{2B_pur}) gives the pumping recursion
\begin{equation}
F_{\mathrm{pump}}^{(k+1)}
=\frac{F_{\mathrm{pump}}^{(k)} F^{(0)}}
{F_{\mathrm{pump}}^{(k)} F^{(0)} + \left(1-F_{\mathrm{pump}}^{(k)}\right)\left(1-F^{(0)}\right)},
\label{eq:bbpssw_twobell}
\end{equation}
and for the symmetric Pauli noise model, Eq.~(\ref{W_pur}) gives
\begin{equation}
F_{\mathrm{pump}}^{(k+1)}=
\frac{1-F_{\mathrm{pump}}^{(k)}-F^{(0)}+10F_{\mathrm{pump}}^{(k)}F^{(0)}}
{5-2F_{\mathrm{pump}}^{(k)}-2F^{(0)}+8F_{\mathrm{pump}}^{(k)}F^{(0)}}.
\label{eq:bbpssw_werner}
\end{equation}

\subsubsection{Proof of Proposition~\ref{prop:fixed_points}}
\quad \textit{(i)} Defining the error ratio $x_0=(1-F^{(0)})/F^{(0)}\in(0,1)$, Eq.~\eqref{eq:bbpssw_twobell} admits the closed-form solution
\begin{equation}
F_{\mathrm{pump}}^{(k)}=\frac{1}{1+x_0^{\,k+1}},
\label{eq:2b_closed_form}
\end{equation}
which increases strictly to unity as $k\to\infty$.

\noindent\textit{(ii)} For symmetric Pauli errors, the fixed points of Eq.~\eqref{eq:bbpssw_werner} satisfy $F_{\mathrm{pump}}^{(k+1)}=F_{\mathrm{pump}}^{(k)}=F$, yielding the quadratic equation
\begin{equation}
\left(8F^{(0)}-2\right)F^{2}
+\left(6-12F^{(0)}\right)F
+\left(F^{(0)}-1\right)=0.
\label{eq:fixed_point}
\end{equation}
The constant term $F^{(0)}-1$ is strictly negative for $F^{(0)}<1$ and the square coefficient $8F^{(0)}-2$ is strictly positive for $F^{(0)}>1/2$, so the quadratic has exactly one positive root:
\begin{equation}
\begin{aligned}
&F_{\mathrm{pump}}^{\ast}\!\left(F^{(0)}\right)
=\frac{12F^{(0)}-6}{2\left(8F^{(0)}-2\right)} \\
&\quad+\frac{
\sqrt{\left(6-12F^{(0)}\right)^{2}
+4\left(8F^{(0)}-2\right)\left(1-F^{(0)}\right)}
}{
2\left(8F^{(0)}-2\right)
}.
\end{aligned}
\label{eq:fixed_point_closed}
\end{equation}
Because $\mathrm{F}_{\mathrm{pur}}(F^{(0)},F^{(0)})>F^{(0)}$ and $\mathrm{F}_{\mathrm{pur}}(1,F^{(0)})<1$, this root lies strictly in $(F^{(0)},1)$. Since $\mathrm{F}_{\mathrm{pur}}$ is continuous and strictly increasing, $F_{\mathrm{pump}}^{(k)}$ converges monotonically to $F_{\mathrm{pump}}^{\ast}$.

\noindent\textit{(iii)} The balanced-tree fidelity of Section~\ref{sec:schedules} follows $F_{\mathrm{bal}}^{(m+1)}=\mathrm{F}_{\mathrm{pur}}(F_{\mathrm{bal}}^{(m)},F_{\mathrm{bal}}^{(m)})$. Under symmetric Pauli noise, setting $F'=F$ in Eq.~\eqref{W_pur} gives $\mathrm{F}_{\mathrm{pur}}(F,F)-F=\frac{(1-F)(4F-1)(2F-1)}{5-4F+8F^{2}}$, which is strictly positive for $F\in(1/2,1)$ and vanishes at $F=1/2$ and $F=1$; hence, for any $F^{(0)}>1/2$, $F_{\mathrm{bal}}^{(m)}$ increases monotonically to unity. Under the single-Pauli-error model, the odds $x=(1-F)/F$ of part (i) turn one balanced level into $x\mapsto x^{2}$, so $F_{\mathrm{bal}}^{(m)}=1/(1+x_0^{2^m})\to1$.
\hfill$\blacksquare$

\subsubsection{Diminishing returns and concavity}

For both noise model families, successive pumping rounds satisfy
\begin{equation}
F_{\mathrm{pump}}^{(k+1)}-F_{\mathrm{pump}}^{(k)}
<
F_{\mathrm{pump}}^{(k)}-F_{\mathrm{pump}}^{(k-1)},
\qquad \forall k\geq 1.
\label{eq:concavity}
\end{equation}
For two-Bell states, stemming from the single-Pauli error model, this follows directly from Eq.~\eqref{eq:2b_closed_form}, whose second finite difference is negative since $x_0<1$. 

For the symmetric Pauli noise model (Werner states), Eq.~\eqref{eq:bbpssw_werner} is a M\"obius transformation $h(F)=(aF+b')/(cF+d)$ with $a=10F^{(0)}-1$, $b'=1-F^{(0)}$, $c=8F^{(0)}-2$, and $d=5-2F^{(0)}$. Its determinant is $ad-b'c=42F^{(0)}-12(F^{(0)})^2-3>0$ for $F^{(0)}\in[1/2,1]$ (the identical numerator appearing in Appendix~\ref{app:dp_optimality}), and $c>0$ for $F^{(0)}>1/2$. Since $h'(F)=(ad-b'c)/(cF+d)^{2}$ and $h''(F)=-2c\,(ad-b'c)/(cF+d)^{3}$, this establishes that $h$ is strictly increasing and strictly concave ($h''<0$). To establish Eq.~\eqref{eq:concavity}, note that because $c>0$ the denominator of $h'$ grows with $F$, so for $F\ge 1/2$ we have $cF+d\ge 2F^{(0)}+4$ and therefore $h'(F)\le (ad-b'c)/(2F^{(0)}+4)^{2}$. Moreover,
\begin{equation}
\left(2F^{(0)}+4\right)^{2}-\left(ad-b'c\right)=16\left(F^{(0)}\right)^{2}-26F^{(0)}+19>0,
\end{equation}
since this quadratic has negative discriminant ($=-540$). Thus $h'(F)<1$ on the entangled regime, with $h'(1/2)$ ranging from $3/5$ at $F^{(0)}=1/2$ to $3/4$ at $F^{(0)}=1$, ensuring that $g(F)=h(F)-F$ is strictly decreasing along the increasing sequence $F_{\mathrm{pump}}^{(k)}$, proving Eq.~\eqref{eq:concavity}.

\subsubsection{Consequence}
Successive purification rounds exhibit diminishing fidelity gains, so the attainable $(\mathrm{F},\mathrm{Pr}_{\mathrm{succ}})$ pairs of nested rounds on a single link form a discrete trade-off frontier. The heuristic (Section~\ref{sec:heuristic}) exploits this structure to construct configurations on that frontier at each bottleneck width; the exact formulations do not require it .

\subsection{Property 2. Log-Linearity of Swap Composition}
\label{app:convexity:loglinearity}

For both models, entanglement swapping composes multiplicatively in the Bell  parameter. From Eq.~\eqref{eq:swap_polarization}, the end-to-end Bell parameter of a chain of $L$ links with per-link Bell parameters $\xi_e$ and gate quality $\eta$ is
\begin{equation}
\xi_{\mathrm{e2e}} = \eta^{\,L-1}\prod_{e=1}^{L} \xi_e,
\label{eq:swap_composition}
\end{equation}
which holds for the symmetric Pauli error model with $\xi=\xi_W$ and for the single-Pauli error model with $\xi=\xi_{2B}$.

\subsubsection{Consequence} Taking logarithms,
\begin{equation}
\log \xi_{\mathrm{e2e}} = \sum_{e=1}^{L} \log \xi_e + (L-1)\log\eta,
\label{eq:swap_log}
\end{equation}
which is linear in the per-link log-Bell parameters $\log \xi_e$. The end-to-end fidelity threshold $F_{\mathrm{th}}$ translates into a Bell parameter threshold $\Xi_{\mathrm{th}}$, and the constraint $\xi_{\mathrm{e2e}}\ge\Xi_{\mathrm{th}}$ becomes the linear constraint $\sum_e \log\xi_e \ge \log\Xi_{\mathrm{th}}-(L-1)\log\eta$. Imperfect swap operations, of quality $\eta<1$ (Eq.~\eqref{eq:swap_polarization}), thus shift the right-hand side of $C3$ by the constant $(L-1)\log\eta$ and leave the formulation unchanged; imperfect purification changes only the precomputed coefficients. The MILP of Section~\ref{sec:milp} treats $\log\Xi_{e,k}$ as a precomputed coefficient for each entry of the per-link choice set, which provides the model-dependent input to the fidelity constraint.

\bigskip
\subsection{Property 3. Objective Compatibility}
\label{app:convexity:objective}

The MILP and DP support a class of objectives broader than the throughput maximization evaluated here. The required conditions differ between the two formulations.

\subsubsection{MILP compatibility} The objective is linear in the binary selection variables $\{x_{e,k}\}$, with coefficients determined by the precomputed log-domain quantities $\{\log W_{e,k},\ \log P_{e,k},\ \log \Xi_{e,k}\}$. Property~2 ensures that the relevant end-to-end physical quantities compose multiplicatively and are therefore linearizable under the logarithm; any objective expressible as a sum of such logarithms, together with an auxiliary variable for bottlenecks, preserves the MILP structure.

\subsubsection{DP compatibility} Three conditions are required. i) The objective must \textit{decompose} at every split point of a sub-path, which supplies optimal substructure. ii) Every operation must be \textit{monotone} in each component of the configuration state $(F,W,P)$ separately, which makes componentwise Pareto dominance sufficiency-preserving. iii) The retained set per sub-path must be \textit{finite}; here it is, since each span is built from finite choice sets over finitely many split points, and each post-swap loop ends after at most $\lfloor\log_2 c_{\max}\rfloor$ rounds.

The throughput objective $T=W\cdot\prod_e P_e$ (with end-to-end width $W = \min_e W_e$) used in this work satisfies the conditions of both the MILP and the DP. For the MILP, $\log T = \log W + \sum_e \log P_e$ is linear in the log-domain quantities. For the DP, swapping composes as $(F,W,P)\mapsto\left(F_{\mathrm{swap}},\ \min(W_A,W_B),\ P_AP_B\right)$ and purification as $(F,W,P)\mapsto\left(\mathrm{F}_{\mathrm{pur}},\ \lfloor W/2 \rfloor,\ P^2\cdot\mathrm{Pr}_\mathrm{succ}\right)$; both are nondecreasing in each of $F$, $W$, and $P$ (Lemma~2 of Appendix~\ref{app:dp_optimality}), and $T$ decomposes across any split point $x$ as $T_{ij}=\min(W_{ix},W_{xj})\cdot P_{ix}P_{xj}$. Note that dominance must be tested on $(F,W,P)$ and not on the scalar $T$: because widths compose through a minimum and probabilities through a product, two configurations of equal $T$ are not interchangeable downstream.

\subsubsection{Consequence} Other objectives in this class --- end-to-end success probability, or products $W^{a}P^{b}\xi_{\mathrm{e2e}}^{c}$ that weight fidelity against throughput --- admit the same MILP and DP without structural change, for any noise model satisfying Property~2 and the monotonicity above; multi-request utilities~\cite{kar2026_utility} additionally require coupling the requests' capacities. The two models considered here are instances of that class rather than the extent of it.


\section{Greedy Counter-Example and the Width Sweep}
\label{app:counterexample}

The heuristic of Section~\ref{sec:heuristic} sweeps the bottleneck width and runs a greedy allocation at each fixed width. A three-link instance shows why the sweep is necessary.

Under the single-Pauli-error model (leading to two-Bell-state diagonal states), let $F_{\mathrm{th}}=0.70$, link $e_1$ be high-capacity and low-fidelity ($c_1=100$, $F_1^{(0)}=0.65$) and $e_2,e_3$ be low-capacity and high-fidelity ($c_2=c_3=10$, $F_2^{(0)}=F_3^{(0)}=0.85$). The unpurified bottleneck is $W=10$ and $F_{\mathrm{e2e}}^{(0)}=0.574$, so purification is required.

A single greedy pass increments the link of highest marginal utility $u_e$ of Eq.~\eqref{eq:utility}. It selects $e_1$ twice. At $\mathbf{k}=(2,0,0)$ a third round on $e_1$ has diminishing returns, and $u_1=0.118$ falls below $u_2=u_3=0.148$. The pass switches to $e_2$ and terminates at $\mathbf{k}=(2,1,0)$, with $F_{\mathrm{e2e}}=0.740$, $W=5$, and $T_{\mathrm{e2e}}=1.18$. The MILP selects $\mathbf{k}=(3,0,0)$, meeting the threshold $F_{\mathrm{e2e}}=0.707$ while retaining $W=10$, for $T_{\mathrm{e2e}}=1.94$ --- an improvement of $64\%$.

The cause is the bottleneck. Because $c_1=100$, three rounds on $e_1$ leave $W_{1,3}=25$, still above the bottleneck, and cost no width. A single round on $e_2$ halves $W_2$ to $5$ and carries the bottleneck with it. The utility $u_e$ prices the throughput lost to one round, but cannot distinguish width spent on a link with slack from width spent on the link that sets the bottleneck. Fixing $W=10$ removes the ambiguity: no round may be placed on $e_2$ or $e_3$, since $W_{2,1}=5<10$, and the greedy pass is forced onto $e_1$, recovering $\mathbf{k}=(3,0,0)$. Sweeping $W$ downward from $\min_e c_e$ therefore repairs the greedy choice, at a cost of one factor of $c_{\mathrm{max}}$ in runtime. Any instance in which the fidelity-limiting and capacity-limiting links differ exposes the same divergence. The same instance exhibits the identical gap and repair when the heuristic runs over tree frontiers instead of pumping depths --- as it must, since under the single-Pauli-error noise model schedule shape is irrelevant (Section~\ref{sec:schedules}) and the frontier reduces to the depth set.


\section{Optimality of the DP Framework}
\label{app:dp_optimality}

This appendix proves Theorem~\ref{thm:dp}: instantiated with a per-link choice set, the DP of Section~\ref{sec:dp} returns the operation sequence of maximum expected throughput, subject to the fidelity threshold $F_{\mathrm{th}}$, over the operation set
\begin{equation}
\mathcal{O}=\{\textsc{raw},\ \textsc{purify},\ \textsc{swap}\}
\end{equation}
on a linear chain. Here $\textsc{raw}$ generates an elementary pair; $\textsc{purify}$ on a link is a round of the link's purification schedule --- a tree over raw pairs, of which pumping is the special case enumerated by the pumping choice set (Section~\ref{sec:schedules}); on a virtual link it is self-purification of the link against a copy of itself; and $\textsc{swap}(i,x,j)$ merges two adjacent virtual links. The proof has three parts: a decomposition lemma, which shows that every sequence in $\mathcal{O}$ has the form the DP enumerates; a Pareto-sufficiency lemma, which shows that discarded configurations are never needed; and an induction on sub-path length.

\subsection{Dominance Relation}
\label{app:dp_optimality:dominance}

Recall the configuration tuple $\sigma=(F,W,P,s)$ of Section~\ref{sec:dp}, with expected throughput $T(\sigma)=W\cdot P$. We say $\sigma'$ \emph{dominates} $\sigma$ ($\sigma'\succeq\sigma$), if
\begin{equation}
F(\sigma')\ge F(\sigma),\ W(\sigma')\ge W(\sigma),\ P(\sigma')\ge P(\sigma),
\label{eq:dominance}
\end{equation}
with at least one inequality strict. The Pareto filter retains $\sigma\in\mathcal{C}_{i,j}$ if and only if no other configuration in the set $\mathcal{C}_{i,j}$ dominates it.

Dominance is tested on the three components separately, and not on the scalar $T$. Because $\textsc{swap}$ composes widths through a minimum and probabilities through a product, $T$ is not a sufficient statistic: the configurations $(W,P)=(4,\tfrac14)$ and $(2,\tfrac12)$ have equal $T$, yet swapped against a partner of width $3$ and probability $1$ they yield $T=\tfrac34$ and $T=1$ respectively. Dominance on $(F,T)$ would discard the survivor.

Two immediate consequences of Eq.~\eqref{eq:dominance} are used below. First, $T$ is nondecreasing under $\succeq$. Second, the feasibility predicate $F\ge F_{\mathrm{th}}$ is preserved upward: if $\sigma$ is feasible and $\sigma'\succeq\sigma$, then $\sigma'$ is feasible.

\subsection{Decomposition}
\label{app:dp_optimality:decomposition}

Lemma 1 (Tree decomposition).
\emph{Let $s$ be any sequence in $\mathcal{O}$ producing a virtual link on the span $(i,j)$ with $j>i+1$. Then there exist a split node $x\in(i,j)$, sequences $s_L$ on $(i,x)$ and $s_R$ on $(x,j)$, and an integer $m\ge0$ such that}
\begin{equation}
s=\textsc{purify}^{\,m}\bigl(\textsc{swap}(s_L,s_R)\bigr).
\end{equation}

\begin{proof}
Induct on the number of operations in $s$. The final operation of $s$ produces the virtual link $(i,j)$ and is therefore either a $\textsc{swap}$ at some split $x$, in which case $m=0$, or a $\textsc{purify}$ on $(i,j)$. In the latter case, $\textsc{purify}$ on a virtual link is self-purification, so its two operands are copies of a single configuration on $(i,j)$ produced by a strictly shorter sequence $s''$. Applying the induction hypothesis to $s''$ gives the claim with $m$ incremented by one. Sequences on a link ($j=i+1$) are the per-link schedules of the instantiated choice set $K_e$, and are enumerated directly.
\end{proof}

Lemma~1 is the reason Algorithm~\ref{alg:dp} suffices: for each span it iterates over every split $x$ and every pair of sub-path configurations, applies $\textsc{swap}$, and then applies $\textsc{purify}$ repeatedly while the width permits, retaining each intermediate state.

\subsection{Pareto Sufficiency}
\label{app:dp_optimality:pareto}

Lemma 2 (Pareto sufficiency).
\emph{Let $\sigma'\succeq\sigma$. Then every operation in $\mathcal{O}$ applicable to $\sigma$ is applicable to $\sigma'$, and its result on $\sigma'$ dominates or equals its result on $\sigma$.}

\begin{proof}
The transition maps are
\begin{equation*}
{\small
\begin{aligned}
\textsc{swap}(\sigma,\tau) &= \Bigl(F_{\mathrm{swap}}\bigl(F(\sigma),F(\tau)\bigr),\ \min\bigl(W(\sigma),W(\tau)\bigr),\ P_{\sigma,\tau}\Bigr), \\
\textsc{purify}(\sigma)    &= \Bigl(\mathrm{F}_{\mathrm{pur}}\bigl(F(\sigma),F(\sigma)\bigr),\ \bigl\lfloor W(\sigma)/2 \bigr\rfloor,\ P(\sigma)^2\,\mathrm{Pr}_{\mathrm{succ}}\Bigr),
\end{aligned}}%
\end{equation*}
with $\mathrm{Pr}_{\mathrm{succ}}=\mathrm{Pr}_{\mathrm{succ}}(F(\sigma),F(\sigma))$ and $P_{\sigma,\tau} = P(\sigma)P(\tau)$. It suffices that each output component be nondecreasing in each input component.

\emph{Applicability.} $\textsc{swap}$ is unconditional. $\textsc{purify}$ requires $W\ge2$, and $W(\sigma')\ge W(\sigma)\ge 2$. Because the two operands of a $\textsc{swap}$ occupy disjoint links, no capacity constraint couples them, and no feasibility test arises.

\emph{Fidelity.} By Property~2 of Appendix~\ref{app:convexity}, $\xi_{\mathrm{swap}}=\eta\,\xi(F(\sigma))\,\xi(F(\tau))$, and $\xi$ is an increasing affine function of $F$ in both models; hence $F_{\mathrm{swap}}$ is nondecreasing in $F(\sigma)$ whenever $\xi(F(\tau))\ge0$, which holds on the entangled regime. For purification, $\mathrm{F}_{\mathrm{pur}}$ is symmetric and, in the two-Bell-state diagonal states,
\begin{equation*}
\frac{\partial \mathrm{F}_{\mathrm{pur}}}{\partial F} = \frac{F'(1-F')}{\bigl[FF'+(1-F)(1-F')\bigr]^{2}} \;\ge\; 0,
\end{equation*}
while in the Werner states
\begin{equation*}
\frac{\partial \mathrm{F}_{\mathrm{pur}}}{\partial F} = \frac{-12F'^{2}+42F'-3}{\bigl[5-2F-2F'+8FF'\bigr]^{2}} \;>\; 0
\end{equation*}
for $F'>(7-3\sqrt{5})/4\approx0.073$, and in particular on $F'>1/2$. Restricting to the diagonal $F=F'$ therefore gives an increasing map.

\emph{Width.} $\min(\cdot,W(\tau))$ and $\lfloor\cdot/2\rfloor$ are both nondecreasing.

\emph{Probability.} $P(\sigma)P(\tau)$ is nondecreasing in $P(\sigma)$. For purification, $P^2\cdot\mathrm{Pr}_{\mathrm{succ}}(F,F)$ is nondecreasing in $P$ (since $P \ge 0$), and $\mathrm{Pr}_{\mathrm{succ}}(F,F)$ is increasing in $F$ on the entangled regime, since its derivative is $4F-2$ in the two-Bell-state diagonal states and $(16F-4)/9$ in the Werner states.

Every component is therefore monotone, and $\textsc{swap}(\sigma',\tau)\succeq\textsc{swap}(\sigma,\tau)$ and $\textsc{purify}(\sigma')\succeq\textsc{purify}(\sigma)$.
\end{proof}

Lemma~2 depends only on the monotonicity of the transition maps, not on their particular algebraic form. It therefore holds for any noise model satisfying the conditions of Appendix~\ref{app:convexity}, and for any accounting of the cumulative success probability that is nondecreasing in $P$ and in $F$. The same monotonicity is what makes Pareto filtering of the per-link choice set itself lossless: a frontier point dominated in $(F,W,P)$ can be discarded from $K_e$ without discarding the optimum.

\subsection{Global Optimality}
\label{app:dp_optimality:theorem}

\begin{proof}[Proof of Theorem~\ref{thm:dp}]
We show by induction on the sub-path length $\ell=j-i$ that $\mathcal{C}_{i,j}$ is \emph{sufficient}: for every sequence $s$ in $\mathcal{O}$ producing a virtual link on $(i,j)$, some $\sigma'\in\mathcal{C}_{i,j}$ dominates the configuration produced by $s$.

\emph{Base case ($\ell=1$).} The base pool $\mathcal{C}_{i,i+1}$ of link $e=(i,i+1)$ instantiates the per-link choice set $K_e$: in the pumping class, it is every admissible depth, which by the link-restricted definition of $\mathcal{O}$ exhausts the sequences on $(i,i+1)$; in the tree class, it is the Pareto frontier over all schedules with at most $c_e$ leaves, which is sufficient for the full schedule family by the dominance relation of Eq.~\eqref{eq:dominance}. Applying the Pareto filter preserves sufficiency by Lemma~2.

\emph{Inductive step.} Assume sufficiency for all sub-paths of length below $\ell$, and let $s$ produce a virtual link on a span $(i,j)$ of length $\ell$. By Lemma~1, $s=\textsc{purify}^{\,m}(\textsc{swap}(s_L,s_R))$ for some split $x$. By the induction hypothesis there exist $\sigma_A\in\mathcal{C}_{i,x}$ and $\sigma_B\in\mathcal{C}_{x,j}$ dominating the configurations of $s_L$ and $s_R$. Algorithm~\ref{alg:dp} evaluates the pair $(\sigma_A,\sigma_B)$ at split $x$, applies $\textsc{swap}$, and appends $\textsc{purify}$ rounds while the width permits; by Lemma~2 the state it obtains after $m$ rounds dominates the configuration of $s$, and by applicability it is reached, since dominance preserves $W\ge2$. The Pareto filter then preserves sufficiency at length $\ell$.

\emph{Termination.} At $\ell=L$ the pool $\mathcal{C}_{1,N}$ is sufficient. Let $T(s^*)$ be any feasible sequence of maximum throughput. Some $\sigma'\in\mathcal{C}_{1,N}$ dominates its configuration, so $\sigma'$ is feasible and $T(\sigma')\ge T(s^*)$. The algorithm returns the feasible configuration of $\mathcal{C}_{1,N}$ maximizing $T$, whose throughput is therefore at least $T(s^*)$, and, being attainable, exactly $T(s^*)$.
\end{proof}

\subsection{Scope and Caveats}
\label{app:dp_optimality:caveats}

The optimality claim is scoped as follows.

\textit{(i) Operation set.} Theorem~\ref{thm:dp} applies to $\mathcal{O}$. It assumes uniform pair pools (every configuration replicates an identical routing tree across width $W$), thereby excluding heterogeneous pools where copies follow different schedules. It further excludes cross-chain purification between distinct routing trees, multi-hop bypass swapping, and distillation protocols other than BBPSSW. Under pumping it excludes link-level recurrence, in which two purified pairs of equal depth are purified against each other; in the two-Bell-state diagonal states this exclusion is without loss because $\mathrm{F}_{\mathrm{pur}}(F_{\mathrm{pump}}^{(k)}, F_{\mathrm{pump}}^{(k)}) = F_{\mathrm{pump}}^{(2k+1)}$ at the identical raw-pair cost of $2k+2$; the cumulative success probability is likewise shape-independent, since for any tree with $b$ leaves it telescopes to $P=(1+x_0^{\,b})/(1+x_0)^{\,b}$ with $x_0=(1-F^{(0)})/F^{(0)}$ (merging subtrees with $a$ and $b$ leaves multiplies $P_aP_b$ by $\mathrm{Pr}_{\mathrm{succ}}=(1+x_0^{a+b})/((1+x_0^{a})(1+x_0^{b}))$), so schedule shape is irrelevant to throughput as well. In the Werner states the constructions differ, and admitting them is precisely what the tree instantiation adds.

\textit{(ii) Noise model.} The transition maps must be monotone in $F$, $W$, and $P$, which Appendix~\ref{app:convexity} verifies for both models considered here.

\textit{(iii) Choice-set resolution.} Exactness is relative to the instantiated choice set. In the pumping instantiation the set is complete up to $k_{\max}=20$, which by Proposition~\ref{prop:fixed_points} saturates well before the bound. In the tree instantiation the frontier is constructed with the $\epsilon$-optimal schedule DP of~\cite{chen2024_jsac}, certifying a per-link loss below $10^{-3}$.

\textit{(iv) Topology and Complexity.} The result is exact for linear chains; extending it to general topologies requires a path-selection layer. Theorem~\ref{thm:dp} asserts correctness, not tractability; retained Pareto set sizes admit no a priori bound and are reported empirically in Section~\ref{sec:results}.

\end{document}